\mathchardef\mhyphen="2D
\def\mb{\mathbf}
\def\bm{\boldsymbol}
\newcommand{\Rmnum}[1]{\uppercase\expandafter{\romannumeral #1\relax}}
\newcommand{\rmnum}[1]{\lowercase\expandafter{\romannumeral #1\relax}}
\newtheorem{theorem}{Theorem}
\theoremstyle{definition}
\theoremstyle{remark}
\newtheorem{remark}[theorem]{Remark}
\newtheorem{proposition}{Proposition}
\begin{document}

\begin{frontmatter}

\title{
A unified framework for data-driven construction of stochastic reduced models with state-dependent memory \\
} 

\author[1]{Zhiyuan She\fnref{fn1}}
\author[1,2]{Liyao Lyu\fnref{fn2}}
\ead{lyuliyao@math.ucla.edu}
\fntext[fn1, fn2]{These authors contributed equally to this work. }
\author[3,4]{Smith Ronain Bryan}
\author[1,5]{Huan Lei}
\ead{leihuan@msu.edu}

\address[1]{{Department of Computational Mathematics, Science \& Engineering, Michigan State University}, 428 S Shaw Ln, East Lansing, MI, USA}
\address[2]{{Department of Mathematics, University of California, Los Angeles}, 520 Portola Plaza, Los Angeles, CA, USA}
\address[3]{{Department of Biomedical Engineering, Michigan State University}, 775 Woodlot Drive, East Lansing, MI, USA}
\address[4]{{Institute for Quantitative Health Science \& Engineering, Michigan State University}, 775 Woodlot Drive, East Lansing, MI, USA}
\address[5]{{Department of Statistics \& Probability, Michigan State University}, 619 Red Cedar Road, East Lansing, MI, USA}

\date{\today}


\begin{abstract}
We present a unified framework for the data-driven construction of stochastic reduced models with state-dependent memory for high-dimensional Hamiltonian systems. The method addresses two key challenges: (\rmnum{1}) accurately modeling heterogeneous non-Markovian effects where the memory function depends on the coarse-grained (CG) variables beyond the standard homogeneous kernel, and (\rmnum{2}) efficiently exploring the phase space to sample both equilibrium and dynamical observables for reduced model construction. Specifically, we employ a consensus-based sampling method to establish a shared sampling strategy that enables simultaneous construction of the free energy function and collection of conditional two-point correlation functions used to learn the state-dependent memory. The reduced dynamics is formulated as an extended Markovian system, where a set of auxiliary variables, interpreted as non-Markovian features, is jointly learned to systematically approximate the memory function using only two-point statistics. The constructed model yields a generalized Langevin-type formulation with an invariant distribution consistent with the full dynamics. We demonstrate the effectiveness of the proposed framework on a two-dimensional CG model of an alanine dipeptide molecule. Numerical results on the transition dynamics between metastable states show that accurately capturing state-dependent memory is essential for predicting non-equilibrium kinetic properties, whereas the standard generalized Langevin model with a homogeneous kernel exhibits significant discrepancies.
\end{abstract}

\begin{keyword}
Stochastic reduced model, data-driven modeling, generalized Langevin equation, transition dynamics
\end{keyword}

\end{frontmatter}


\section{Introduction}

The construction of reduced models for high-dimensional dynamical systems poses a fundamental problem in computational mathematics and multiscale modeling. In many real applications related to fluid physics, materials science, and molecular modeling, the direct simulation of the full-dimensional micro-models often becomes prohibitively expensive. Existing approaches often resort to modeling the evolution of a set of coarse-grained (CG) variables to capture the collective dynamics on the resolved scale of interest. Mathematically, this can be viewed as projecting the full dynamics onto a low-dimensional space, yielding an effective reduced model that encodes the effects of the unresolved dynamics. While the heterogeneous multiscale method \cite{HMM_2003, HMM_review_2007} and equation-free \cite{Equation_Free_2003, Equation_Free_2009} provide rigorous frameworks for bridging micro- and macro-scales, these approaches generally rely on a clear scale separation. For Hamiltonian systems without scale separation, the reduced dynamics generally exhibits memory effects and stochastic behavior due to the unresolved degrees of freedom. For such cases, the generalized Langevin equation (GLE) provides a mathematically principled structure for the reduced model. This model can be derived from the Mori–Zwanzig (MZ) projection formalism \cite{Mori1965, Zwanzig61} and describes the effective dynamics of the CG variables in terms of a conservative force, a memory kernel, and a stochastic noise term that are coupled through a fluctuation–dissipation relation to ensure the thermodynamic consistency.

Despite its formal appeal, the practical construction of GLE-type reduced models presents two major challenges. The first lies in the complexity of the reduced model formulation. In particular, the rigorous MZ-based studies \cite{Darve_PNAS_2009,hijon2010mori, Lei_Cas_2010,Vroylandt_Monmarche_JCP_2022,ayaz2022generalized,lyu2023construction} show that the memory term may depend on the CG variables in a complex way where the analytical form is generally unknown.  To construct the closed-form reduced models for stochastic simulations, most existing approaches further simplify the memory term into a homogeneous kernel independent of the CG variables, namely, the standard GLE. Several approaches based on the hierarchical construction \cite{LiXian2014, li2015incorporation, Ma2016, zhu2018faber, ma2019coarse, Hudson2018, zhu2020generalized} and data-driven parameterization \cite{lange2006collective, ceriotti2009langevin, baczewski2013numerical, Lei_Li_PNAS_2016, russo2019deep, Jung_Hanke_JCTC_2017, lee2019multi, Grogan_Lei_JCP_2020, Klippenstein_Vegt_JCP_2021,  vroylandt2022likelihood, she2023data, Zhu_Tang_JCP_2023, xie_Car_E_PNAS_2024, Xie_E_JCTC_2024} have been proposed to construct the homogeneous memory kernel. While such models capture non-Markovian effects beyond mean-field approximations, they are generally insufficient to capture the heterogeneous nature of the energy dissipation processes over the CG space. To address this issue, our recent work \cite{ge2024data} proposed a data-driven approach for constructing the GLE with a state-dependent memory term and showed that this broadly over-simplified effect can play a crucial role in accurately predicting the non-equilibrium transition dynamics, where the standard GLE fails to reproduce.
On the other hand, that study focused on a one-dimensional CG variable, and the approach relies on evaluating three-point correlation functions conditioned on various initial states, which can become computationally expensive for systems with multi-dimensional CG variables.

The second major challenge lies in the efficient exploration and sampling of the phase space with complex energy landscapes. In particular, the energy landscape may consist of multiple metastable states separated by large energy barriers; direct sampling could be trapped in a local minimum. This difficulty has been well recognized for the construction of the free energy function. Several enhanced sampling approaches, such as the umbrella sampling \cite{Torrie_Valleau_Umbrella_JCP_1977, Kumar_Kollman_JCC_1992}, metadynamics \cite{Laio_Parrinello_PNAS_2002}, adaptive biasing force \cite{Darve_JCP_2001}, and temperature accelerated molecular dynamics \cite{abrams2008efficient, TAMD_CPL_2006} have been developed. However, the construction of the state-dependent memory term further requires sampling dynamical observables, such as time correlation functions over the CG space. This makes many biased-sampling approaches less suitable. In particular, existing strategies based on the direct estimation of the probability density function (PDF) of the CG variables or thermodynamic perturbations cannot be readily applied. This calls for a sampling strategy that is capable of concurrently collecting both equilibrium samples for free energy construction and unbiased dynamical data for memory estimation in a unified and consistent manner.

In this work, we aim to address the above two challenges through a unified framework for data-driven construction of stochastic reduced models with complex free energy landscapes and state-dependent memory. Specifically, we employ a consensus-based enhanced sampling method \cite{lyu2023consensus} to facilitate exploration of the phase space. 
Unlike most existing approaches for the free energy construction, the present method is gradient-free where the sampling dynamics is driven by the local residual and therefore independent of the underlying potential of the full model. This feature enables the efficient sampling across stiff energy landscapes and, importantly, supports the concurrent acquisition of equilibrium data (for free energy estimation) and dynamical observables (for memory construction) from a shared set of simulation trajectories. To construct the reduced model, we adopt a data-driven approach through the joint learning of both a set of non-Markovian features and the extended Markovian dynamics of both the CG variables and these features. In contrast to the formulation in Ref. \cite{she2023data}, the proposed model enables systematic approximation of state-dependent memory, while introducing a coherent multiplicative noise term that preserves the consistent invariant distribution. Furthermore, the parameterization relies only on two-point correlation functions, in contrast to the three-point statistics required in Ref. \cite{ge2024data}.
We demonstrate the effectiveness of the proposed approach by constructing a two-dimensional reduced model of an alanine dipeptide molecule in aqueous solution. Numerical results show that the constructed model accurately reproduces correlation functions conditioned on various initial states. More importantly, comparison of the predicted transition times with full MD simulations shows that accurately modeling the broadly over-simplified state-dependent memory is essential for modeling non-equilibrium collective behaviors, whereas the standard GLE with a homogeneous kernel shows limitations. 

\section{Methods}

\subsection{Preliminaries}

The full model under consideration is a Hamiltonian system with $2N\mhyphen$dimensional phase space vector $\bZ =  [\bQ;\bP]$, where $\bQ, \mb P\in \mathbb{R}^{N}$ represent the position and momentum vector, respectively. With Hamiltonian $H(\mb Z) = \frac{1}{2} \mb P^T \mb M^{-1} \mb P + V(\mb Q)$, the governing equation follows 
\begin{equation}
\dot{\mb Z} = \mathbf{S} \nabla H(\mb Z), \quad \mb Z(0) = \mb Z_0,  
\end{equation} 
where $\mb M = {\rm diag}(\mb M_1, \cdots, \mb M_N)$ is the constant mass matrix, $V: \mathbb{R}^N \to \mathbb{R}$ is the potential function and $\mathbf{S}$ is the symplectic matrix.  Our goal is to construct a reduced stochastic model that captures the effective dynamics of a set of CG variables $\mb z = [\mb q; \mb p]$ by defining a mapping $\phi :\mathbb{R}^{2N} \to \mathbb{R}^{2m}$, where $\mb q = \phi^{q}(\mb Q)$ and $\mb p = \phi^{q}(\mb Q, \mb P)$ represent the reaction coordinates and the corresponding momenta, respectively. By the Koopman operator, the time evolution of $\mb z$ can be written as $\dot {\mb z} = \mathcal{L} \mb z$, where $\mathcal{L} \phi(\mb Z) = -    (\left(\nabla H (\mb Z_0) \right)^\top S\nabla_{\mb Z_0}) \phi(\mb Z)$ represents the Liouville operator that depends on the full-dimensional phase vector $\mb Z$. 

To construct the closed form of the reduced model, we follow Zwanzig's projection formalism by defining the projection operator that acts on a phase-space function as a condition expectation with respect to the CG variables $\mb z(0)= \bm z$, i.e.,
\[
\mathcal P_{\bm z} f := \mathbb{E} [ f(\mb Z) \vert \phi(\mb Z) = \bm z] = \frac{\int \delta(\phi(\bZ)-\bm z)f(\bZ)\rho_0(\bZ) \intd \bZ}{\int \delta(\phi(\bZ)-\bz)\rho_0(\bZ) \intd \bZ},
\]
where $\rho_0(\mb Z) \propto \exp(-\beta H(\mb Z))$ represents the equilibrium distribution.  By the Duhamel-Dyson formula, the reduced dynamics follows:
\[
\dot{\mb z}(t) = {\rm e}^{\mathcal{L}t}\mathcal{P}_{\bm z}\mathcal{L}\mb z(0) + \int_0^t \diff s {\rm e}^{\mathcal{L}(t-s)}\mathcal{P}_{\bm z} \mathcal{L} {\rm e}^{\mathcal{Q}_{\bm z}\mathcal{L}s} \mathcal{Q}_{\bm z}\mathcal{L}\mb z(0)   + {\rm e}^{\mathcal{Q}_{\bm z}\mathcal{L}t} \mathcal{Q}_{\bm z}\mathcal{L}\mb z(0),
\label{eq:Z_MZ_full}
\]
where $\mathcal{Q}_{\bm z} = \mb I - \mathcal{P}_{\bm z}$ and the right-hand-side represents the mean-field, the memory and the fluctuation term that arises from the initial condition, respectively. By assuming that the memory term only depends on the CG coordinates $\mb q$ and interpreting the fluctuation terms as a stochastic process, the reduced dynamics can be simplified as
\begin{equation}\label{equ:GLE}
    \begin{aligned}
    \dot \bq &=  \nabla_{\mb p} F(\mb p, \mb q),\\
    \dot \bp &= -\nabla_{\mb q} F(\mb p, \mb q) + \int_0^t \mb K(\mb q(s), t-s)  \dot{\mb q}(s) \diff s + \mathcal{R}_t,
\end{aligned}
\end{equation}
where $F(\mb q, \mb p)$ is the total free energy, $\mb K(\mb q, t)$ is the memory kernel, and $\mathcal{R}_t$ is the multiplicative noise term related to the memory kernel through the fluctuation-dissipation relation.
Specifically, the total free energy $F(\mb q, \mb p)$ takes the form
\begin{equation*}
F(\mb q, \mb p) = U(\mb q) + \frac{1}{2} \mb p^\top \mb m(\mb q)^{-1} \mb p  + \frac{1}{2} \beta^{-1} \ln {\det \left(\mb m(\mb q)\right)},
\end{equation*}
where $U(\mb q)$ is the conservation free energy with respect to $\mb q$ and is determined by the marginal density function $\rho_q(\mb q)$, i.e., 
\begin{equation}
U(\mb q) = -\beta^{-1} \log \rho_q(\mb q), \quad  \rho_q(\mb q) = \int \delta(\phi_q(\mb Q) - \mb q) \rho_0(\mb Q) \diff \mb Q.  
\label{eq:free_energy}
\end{equation}
$\mb m(\mb q)$ is the position-dependent mass matrix given by
\begin{equation}
\mb m(\mb q)^{-1} = \beta \int \dot{\phi}_q(\mb Q)  \dot{\phi}_q(\mb Q)^\top \delta(\phi_q(\mb Q) - \mb q) \rho_0(\mb Q) \diff \mb Q. 
\label{eq:mass}
\end{equation}
The memory kernel $\mb K(\mb q, t)$ is defined by the Zwanzig projection operator $\mathcal{P}_{\bm z}$, i.e.,
\begin{equation}
\mb K(\mb q, t) = \mathcal{P}_{\bm z}\left[({\rm e}^{\mathcal{Q}_{\bm z} \mathcal{L} t}\mathcal{Q}_{\bm z} \mathcal{L}\mb p) (\mathcal{Q}_{\bm z} \mathcal{L}\mb p)^\top\right].    
\label{eq:memory}
\end{equation}
In particular, by simplifying $\mb K(\mb q, t)$ into a homogeneous kernel $\mb K(\mb q, t) \approx \bm\theta (t)$, Eq. \eqref{equ:GLE} reduces to the standard GLE. In this work, we do not take this assumption and aim to accurately model the state-dependent form $\mb K(\mb q, t)$ to capture the heterogeneous energy dissipation in the CG space.  Accordingly, the construction of the reduced model \eqref{equ:GLE} lies in two essential challenges: (\Rmnum{1}) dynamical exploration of the complex energy landscape for efficient sampling of CG observables following $\rho_q(\mb q)$ in Eqs \eqref{eq:free_energy}\eqref{eq:mass}\eqref{eq:memory};  (\Rmnum{2}) accurate approximation of the memory kernel \eqref{eq:memory} preserving the state-dependent non-Markovian effects with a coherent noise term.

\subsection{Consensus-based enhanced sampling}
\label{sec:CAS}

To efficiently explore the CG space and sample both equilibrium and dynamical observables, we adopt the consensus-based adaptive sampling (CAS) method developed in Ref.~\cite{lyu2023consensus}. This approach formulates sampling as a residual-driven minimax optimization problem and evolves an interacting particle system according to a McKean-type stochastic differential equation (SDE). Unlike conventional enhanced sampling methods \cite{Laio_Parrinello_PNAS_2002,TAMD_CPL_2006}, the dynamical exploration of the CAS method is gradient-free 
and independent of the underlying full potential function. Furthermore,  unlike the adaptive sampling strategies \cite{tang2022adaptive, tang2023adversarial, Gao_Yan_SIAM_2023, Gao_Wang_JCP_2023} for solving high-dimensional partial differential equations, the method does not rely on the free query of arbitrary sampling points and is well-suited for systems with constrained phase space where thermodynamically accessible regions are unknown \emph{a priori}.

Without loss of generality, let $X(\mb q)$ denote an quantity of interest and $X_\mathcal{N}(\mb q)$ its surrogate. The CAS method iteratively solves the following minimax problem
\begin{equation}
\label{equ:min-max}
\min_{X_\mathcal{N}}\max_{\rho} (\mathcal{L}_\mathcal{N},\rho),
\end{equation}
where the residual is defined as $\mathcal{L}_\mathcal{N}(\mb q) = \Vert X(\mb q) - X_\mathcal{N}(\mb q)\Vert^2$, and the maximization objective represents the weighted value under the sampling distribution $\rho(\mb q)$, i.e., 
\begin{equation}
    (\mathcal{L}_\mathcal{N},\rho) =  \int (\mathcal{L}_\mathcal{N}(\mathbf{q}) - \kappa_h^{-1} \ln \rho (\mathbf{q})) \rho(\mathbf{q}) \mathrm{d} \mathbf{q}. 
\label{eq:entropy_regular}    
\end{equation}
Here, the second term serves as an entropy regularization. The rationale of this term can be understood by examining the max-problem of Eq. \eqref{equ:min-max}, which is convex with the analytical solution 
\begin{equation}
\rho^{\ast}(\mb q) := \arg \max_{\rho} (\mathcal{L}_\mathcal{N},\rho) \propto \exp(-\kappa_h\mathcal{L}_\mathcal{N}^{-}(\mb q)),
\label{eq:max_problem}
\end{equation}
where $\mathcal{L}_\mathcal{N}^{-}(\mb q) = - \mathcal{L}_\mathcal{N}(\mb q)$ and $\kappa_h$ can be interpreted as the inverse of the temperature. For the low temperature limit ($\kappa_h \to \infty$), $\rho^{\ast}$ 
concentrates as a Dirac measure at $\mb q^{\ast} = \arg\max \mathcal{L}_\mathcal{N}(\mb q)$, suppressing the dynamical exploration.  
Conversely, the high-temperature limit ($\kappa_h \to 0$) yields a more diffuse distribution and enables the efficient exploration of the uncharted sample space.

In this work, we choose $X(\mb q) = -\nabla U(\mb q) $ i.e., the mean force for the free energy, which naturally targets kinetically important regions. Moreover, this choice also enables concurrent collection of data for dynamical observables such that time correlation functions conditioned with various initial states can be conveniently collected; see Remark \ref{rem:adaptivity} for further discussion. The mean force ${\mb F}(\mb q^{\ast}) = -\nabla_{\mb q^{\ast}} U(\mb q)$ at a target point $\mb q^{\ast}$ can be estimated from the restrained dynamics by introducing a biased quadratic potential to the full MD system, i.e., 
\begin{equation}
\widetilde{V}(\mb Q; k, \mb q^{\ast}) = V(\mb Q) + V_b(\mb Q; k, \mb q^{\ast}) \quad V_b(\mb Q; k, \mb q^{\ast})=  \frac{k}{2}  \left\vert \phi^{q}(\mb Q) - \mb q^{\ast}\right\vert^2,
\label{eq:restrained_MD}
\end{equation}
where $k$ is the magnitude of the biased potential. $\mb F(\mb q^{\ast})$ can be sampled from the equilibrium distribution under biased potential $\rho_b(\mb Q; k, \mb q^{\ast}) \propto \exp(-\widetilde{V}(\mb Q; k, \mb q^{\ast}))$, i.e., 
\begin{equation}
\begin{split}
\mb F(\mb q^{\ast}) &= \lim_{k\to \infty} \int -\nabla_{\mb Q} V_b(\mb Q; k, \mb q^{\ast}) \rho_b(\mb Q; k, \mb q^{\ast}) \diff \mb Q \\
\lim_{k\to \infty} \int \rho_b(\mb Q; k,\mb q^{\ast}) \diff \mb Q &= \int \delta(\phi_q(\mb Q) - \mb q^{\ast}) \rho_0(\mb Q) \diff \mb Q :=\rho_q(\mb q^{\ast}),    
\end{split}
\label{eq:bias_dist}
\end{equation}
and we refer to Ref. \cite{TAMD_CPL_2006} for the proof. Accordingly, the sampling of $\rho_b(\mb Q; k, \mb q^{\ast})$ by Eq. \eqref{eq:restrained_MD} provides training data not only for constructing the free energy, but also for computing dynamical quantities. Specifically, for each $\mb q^{\ast}$, it also provides a natural approach to collecting the training samples conditional on the initial state $\phi_q(\mb Q) = \mb q^{\ast}$
\begin{equation}
S(\mb q^{\ast}) = \left\{\mb q^{(j,l)}  = \phi_q(\mb Q^{(l)}(t_j))  \left \vert \rho\left(\mb Q^{(l)}(0)\right) = \rho_b\left(\mb Q; \mb q^{\ast}\right) \right. \right\}_{j,l=1}^{N_j, N_l},
\label{eq:S_q_ast}
\end{equation}
which are used to estimate the mass matrix $\mb m(\mb q)$ via Eq.~\eqref{eq:mass} and the state-dependent correlations for learning the memory term $\mb K(\mb q, t)$ discussed in Section~\ref{sec:data_driven_learning}.  Therefore, the CAS framework establishes a shared sampling strategy which provides the training data for all the reduced modeling terms $U(\mb q)$, $\mb m(\mb q)$, and $\mb K(\mb q, t)$. This avoids the need to explicitly evaluate the full orthogonal dynamics ${\rm e}^{\mathcal{Q} \mathcal{L} t}$ and ensures consistent data collection across modeling terms.

To collect $\mb F(\mb q^{\ast})$ and $S(\mb q^{\ast})$ over the CG space, we need to establish efficient sampling of $\rho^{\ast}(\mb q)$ by Eq. \eqref{eq:max_problem}. However, the analytical solution is only formal since the residual $\mathcal{L}_{\mathcal{N}}(\mb q)$ is unknown \emph{a priori} and the numerical query at an arbitrary point can be computationally expensive or even thermodynamically inaccessible. As a result, common approaches such as Markovian Chain Monte Carlo and
Langevin-type algorithms can not be directly used. Instead, the CAS method introduces an interacting particle system $\left\{\mb q_i\right\}_{i=1}^{N_w}$ and adaptively seeks a local quadratic approximation of $\mathcal{L}_{\mathcal{N}}(\mb q)$ near the max-residual region, i.e., 
\begin{equation*}
G(\mb q; \mathcal{M}, \mathcal{V}) = \frac{1}{2}(\mb q - \mathcal{M})^{\top}\mathcal{V}^{-1}(\mb q-\mathcal{M}).      
\end{equation*}
Specifically, $\mathcal{M}$ and $\mathcal{V}$ represent the mean and covariance matrix of a weighted empirical distribution of the particle system, i.e.,
\begin{equation}
\begin{split}
\rho_{l}(\mb q) &= \sum_{i=1}^{N_w} \delta(\mb q - \mb q_i) \exp(-\kappa_l \mathcal{L}^{-}_{\mathcal{N}}(\mb q_i)) / \sum_{i=1}^{N_w} \exp(-\kappa_l \mathcal{L}^{-}_{\mathcal{N}}(\mb q_i)) \\
\mathcal{M}[\rho_{l}] &= \int \mb q  \rho_{l}(\mb q) \diff \mb q \\
\mathcal{V}[\rho_{l}] &= \kappa_t \int (\mb q - \mathcal{M}[\rho_{l}]) \otimes (\mb q - \mathcal{M}[\rho_{l}])   \rho_{l}(\mb q) \diff \mb q, \\
\end{split}
\label{eq:rho_M_V}
\end{equation}
where $\kappa_l$ and $\kappa_t$ are parameters interpreted as the inverse of the temperature. 
Following the Laplace's principle, 
by taking the low temperature limit ($\kappa_l \to \infty)$, $G(\mb q)$ provides a quadratic approximation of $\mathcal{L}^{-}_{\mathcal{N}}(\mb q)$ near the max-residual point. 

To sample $\rho^{\ast}(\mb q)$, we treat each particle as a random walker governed by the following McKean-type SDE
\begin{equation}
\label{eq:mcKean_SDE}
    \begin{split}
     \intd \bq_i(t) = \frac{1}{\gamma} \nabla_q G(\bq_i(t); \mathcal{M}_t , \mathcal{V}_t ) \intd t + \sqrt{\frac{2}{\kappa_h\gamma}}\intd \bW(t), \quad i = 1, \cdots, N_w, 
    \end{split}
\end{equation}
where $\mathcal{M}_t = \mathcal{M}[\rho_l^t]$, $\mathcal{V}_t = \mathcal{V}[\rho_l^t]$ and $\rho_l^t$ represents the instantaneous weighted empirical distribution of $\left\{\mb q_i(t)\right\}_{i=1}^{N_w}$. 
Consequently, the mean-field potential $G(\bq)$ exploits the local quadratic approximation of the residual $\mathcal{L}_{\mathcal{N}}$ under a low-temperature limit and drives the random walkers towards $\mathcal{M}_t$, which represents the maximal points conditioned on the information exploited so far. Meanwhile, the second stochastic term promotes the exploration of uncharted space, where $\gamma$ denotes the friction coefficient and $\bW(t)$ denotes the standard Brownian motion. 

Model \eqref{eq:mcKean_SDE} modulates the coupling between the exploitation and exploration (i.e., the conservative and stochastic term) via a high temperature $\kappa_h^{-1}$ consistent with the target distribution $\rho^{\ast}(\mb q)$ defined by Eq. \eqref{eq:max_problem}. Specifically,  we show that by choosing $\kappa_t= \kappa_h +\kappa_l$, the distribution will converge to the target one under appropriate conditions. 
\begin{proposition}
\label{prop:invariant_dist}
Suppose $\mathcal{L}_\mathcal{N}^-(\mathbf{q})$ takes a local quadratic approximation in form of 
$\frac{1}{2} (\mathbf{q}-\mathbf{\mu})^\top\Sigma^{-1}(\mathbf{q}-\mathbf{\mu})$. 
If the dynamics converge to an invariant distribution, then the stationary density is given by 
\begin{equation}
    \rho_\infty =  \frac{\exp{(-\kappa_h \mathcal{L}_\mathcal{N}^-(\mathbf{z}))}}{\displaystyle \int \exp {(-\kappa_h \mathcal{L}_\mathcal{N}^-(\mathbf{z}))} \mathrm{d} \mathbf{z}},
\end{equation} by choosing $\kappa_t = \kappa_l + \kappa_h$.
\end{proposition}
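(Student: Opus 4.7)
The plan is to treat the claim as a self-consistency (fixed-point) statement for the mean-field Fokker--Planck equation associated with the nonlinear McKean SDE \eqref{eq:mcKean_SDE}, since the proposition only asserts that \emph{if} the dynamics converges, the stationary density takes the stated form. I will first propagate the McKean dynamics to a single-particle Fokker--Planck equation of the form $\partial_t \rho = -\tfrac{1}{\gamma}\nabla\!\cdot\!(\rho\,\nabla_q G(\mathbf q;\mathcal M[\rho_l],\mathcal V[\rho_l])) + \tfrac{1}{\kappa_h\gamma}\Delta\rho$, where $\rho_l$ is the reweighted empirical measure of \eqref{eq:rho_M_V} evaluated with $\rho$ in place of the particle ensemble. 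The nonlinearity enters solely through the pair $(\mathcal M_t,\mathcal V_t)$, which turns the verification into a consistency check between the proposed invariant density and the weighted moments it produces.

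The second step is to plug in the ansatz $\rho_\infty(\mathbf q)\propto \exp(-\kappa_h \mathcal L_\mathcal N^{-}(\mathbf q))$ and use the quadratic hypothesis to read off the resulting $\mathcal M_\infty$, $\mathcal V_\infty$. Under $\mathcal L_\mathcal N^{-}(\mathbf q) = \tfrac12(\mathbf q-\mu)^\top\Sigma^{-1}(\mathbf q-\mu)$, the reweighted density becomes
\[
\rho_l(\mathbf q)\;\propto\;\exp(-\kappa_l \mathcal L_\mathcal N^{-}(\mathbf q))\,\rho_\infty(\mathbf q)\;\propto\;\exp\!\Big(-(\kappa_l+\kappa_h)\tfrac12(\mathbf q-\mu)^\top\Sigma^{-1}(\mathbf q-\mu)\Big),
\]
i.e., a Gaussian $\mathcal N(\mu,\Sigma/(\kappa_l+\kappa_h))$. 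Taking moments then gives $\mathcal M[\rho_l]=\mu$ and, by the definition in \eqref{eq:rho_M_V}, $\mathcal V[\rho_l]=\kappa_t\,\Sigma/(\kappa_l+\kappa_h)$. The choice $\kappa_t=\kappa_l+\kappa_h$ makes these collapse to $\mathcal V_\infty=\Sigma$, so the mean-field potential at stationarity satisfies $G(\mathbf q;\mathcal M_\infty,\mathcal V_\infty)=\mathcal L_\mathcal N^{-}(\mathbf q)$ exactly.

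Having frozen the coefficients, the SDE \eqref{eq:mcKean_SDE} reduces to a linear Ornstein--Uhlenbeck equation whose invariant density is a Gaussian with mean $\mu$ and covariance $\Sigma/\kappa_h$, which is precisely $\rho_\infty$. A direct substitution into the stationary Fokker--Planck equation confirms $\nabla\!\cdot\!(\rho_\infty \nabla_q \mathcal L_\mathcal N^{-}) + \kappa_h^{-1}\Delta\rho_\infty = 0$ since $\nabla\rho_\infty = -\kappa_h \rho_\infty \nabla \mathcal L_\mathcal N^{-}$, closing the self-consistency loop and establishing the claim.

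I anticipate the main conceptual obstacle is handling the circular dependence between the ansatz and the mean-field moments cleanly: one must be careful that $\mathcal M[\rho_l],\mathcal V[\rho_l]$ are evaluated using the same $\rho$ that is being tested as invariant, rather than collapsing onto the original empirical distribution. Once this is organized as a fixed-point check, the algebra is essentially two Gaussian integrals, and the only nontrivial ingredient is the exponent-matching identity $\kappa_l+\kappa_h=\kappa_t$ which is exactly the hypothesis. I am not attempting to prove convergence of the particle system to $\rho_\infty$; as stated, the proposition only requires that \emph{if} a stationary measure exists, it must be the Gibbs form above.
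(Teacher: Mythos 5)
Your proposal is correct and follows essentially the same route as the paper: freeze the mean-field coefficients at their stationary values so the McKean SDE reduces to an Ornstein--Uhlenbeck process, use the fact that exponential reweighting by the quadratic $\kappa_l \mathcal{L}_\mathcal{N}^-$ preserves Gaussianity, and observe that the choice $\kappa_t = \kappa_l + \kappa_h$ makes the moments self-consistent with the Gibbs density $\propto \exp(-\kappa_h \mathcal{L}_\mathcal{N}^-)$. The only organizational difference is the direction of the fixed-point argument --- you verify that the Gibbs ansatz is a self-consistent stationary solution, whereas the paper starts from an assumed invariant density, deduces it is Gaussian via the fluctuation--dissipation relation of the frozen linear SDE, and then exhibits $(\mu, \Sigma)$ as a solution of the resulting moment equations; neither version proves uniqueness of the fixed point, and the paper itself acknowledges its proof is only formal.
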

\begin{proof}
Let $\rho_{\infty}(\mathbf z)$ denote the invariant distribution of Eq. \eqref{eq:mcKean_SDE}. Then $\rho_{\infty} (\mathbf z)$ must be the invariant distribution of the following SDE 
\begin{equation}
\mathrm d \mathbf q = - \frac{1}{\gamma} \mathcal{V}^{-1}_{\kappa_l, \infty}(\mathbf q - \mathcal{M} _{\kappa_l, \infty})  \mathrm d t + \sqrt{\frac{2}{\gamma \kappa_h} } \mathrm d \mb W_t ,
\label{eq:asym_Langevin}
\end{equation}
where $\mathcal{M} _{\kappa_l, \infty}$ and $\kappa_t^{-1} \mathcal{V}_{\kappa_l, \infty}$ are the mean and the covariance matrix of the re-weighted density $\propto ~\rho_{\infty}(\mathbf z) e^{-\kappa_l \mathcal{L}_{\mathcal{N}}^{-}(\mathbf z)}$. With the fluctuation-dissipation relation for Eq. \eqref{eq:asym_Langevin}, we can show $\rho_{\infty} (\mathbf z)$ follows the Gaussian distribution with mean $\mathcal{M} _{\kappa_l, \infty}$ and covariance matrix $\kappa_h^{-1} \mathcal{V}_{\kappa_l, \infty}$. 

Since $\mathcal{L}_{\mathcal{N}}(\mathbf q) = \frac{1}{2} (\mathbf q-\mathbf\mu)^\top\Sigma^{-1}(\mathbf q-\mathbf\mu)$ is quadratic, the re-weighted density of a Gaussian distribution $\rho(\mathbf z) \sim \mathcal{N}(\mathcal{M} , \mathcal{V} )$ remains Gaussian, i.e., $
\rho(\mathbf z) e^{-\kappa_l \mathcal{L}_{\mathcal{N}}^{-}(\mathbf z)}  \propto ~  \mathcal{N} ({\mathcal{M} }_{\kappa_l}, {\mathcal{V} }_{\kappa_l}),
$ where ${\mathcal{M} }_{\kappa_l}$ and $\mathcal{V}_{\kappa_l}$ are defined by
\begin{equation}
\begin{split}
\mathcal{M}_{\kappa_l}(\mathcal{M}, \mathcal{V} ) &= (\mathcal{V} ^{-1} + \kappa_l \Sigma ^{-1}) ^{-1} (\kappa_l\Sigma^{-1}\mathbf\mu+V^{-1}\mathcal{M} ),\\
\mathcal{V}_{\kappa_l}(\mathcal{M}, V) &= (\mathcal{V} ^{-1} + \kappa_l \Sigma ^{-1}) ^{-1}.
\end{split}
\label{eq:m_v_reweight_Gauss}
\end{equation}
Therefore, the mean and covariance of the steady-state Gaussian distribution satisfy
\begin{equation}
\begin{aligned}
\mathcal{M} _{\kappa_l, \infty} &= (\kappa_h \mathcal{V} _{\kappa_l, \infty}^{-1} + \kappa_l \Sigma ^{-1})  (\kappa_l\Sigma^{-1}\mathbf\mu+\kappa_h \mathcal{V} _{\kappa_l, \infty}^{-1} \mathcal{M} _{\kappa_l, \infty}),\\ 
\mathcal{V} _{\kappa_l, \infty}  & = \kappa_t (\kappa_h \mathcal{V} _{\kappa_l, \infty}^{-1} + \kappa_l \Sigma ^{-1}) ^{-1}.
\end{aligned}
\nonumber
\end{equation}
It is easy to show that by choosing $\kappa_t = \kappa_l + \kappa_h$, $\mathcal{M} _{\kappa_l, \infty}$ and $\mathcal{V} _{\kappa_l, \infty}$ recovers $\mathbf\mu$ and $\Sigma$, respectively, and the invariant distribution takes the form
\begin{equation}
\rho_{\infty}(\mathbf z) \sim \mathcal{N}\left(\mathbf\mu, \kappa_h^{-1} \Sigma\right).
\nonumber
\end{equation}
\end{proof}

While the proof is only formal, it illustrates the essential idea of the present CAS method. A more rigorous version can be found in Proposition 2.5 of Ref. \cite{lyu2023consensus}. Specifically, we show that by choosing the parameter $\kappa_t = \kappa_l + \kappa_h$ for $\mathcal{V}$ in Eq. \eqref{eq:rho_M_V}, the empirical distribution of the random walker \eqref{eq:mcKean_SDE} converges to the target distribution $\rho^{\ast} (\mb q)$ with the first and second moments converging at an exponential rate under appropriate conditions. In particular, unlike common enhanced sampling methods, Eq. \eqref{eq:mcKean_SDE} only depends on the local residual and does not explicitly rely on the underlying atomistic potential function. This unique feature enables us to choose a larger time step and achieve efficient sampling for systems with complex energy landscapes. 

In order to ensure a stable estimation, we introduce a moving average to the computation of the mean and covariance
\[
\begin{aligned}
   \mathcal {M}_{t+1} = \beta_1 \mathcal {M}_{t} + (1-\beta_1) \mathcal M [\rho_l],\\
\mathcal {V}_{t+1} = \beta_2 \mathcal {V}_{t} + (1-\beta_2) \mathcal V [\rho_l]. 
\end{aligned}
\]
A step-dependent normalizer is also introduced to ensure this estimation is unbiased. 
Algorithm. \ref{alg:cas} summarizes the detailed sampling process.

\begin{remark}
\label{rem:adaptivity}
In principle, the memory term $\mb K(\mb q, t)$   can be directly evaluated by approximating the orthogonal dynamics ${\rm e}^{\mathcal{Q} \mathcal{L} t}$ in Eq. \eqref{eq:memory} (e.g., see Refs. \cite{hijon2010mori, lyu2023construction}) where the residual between $\mb K(\mb q, t)$ and its surrogate can be incorporated into the adaptive metric $X(\mb q)$. 
In this work, we only impose the sampling adaptivity based on the free energy by choosing $X(\mb q) = -\nabla U(\mb q)$. 
This choice ensures that samples are concentrated in kinetically important regions while simultaneously enabling the collection of dynamical observables for $\mb m(\mb q)$ and $\mb K(\mb q,t)$. This shared sampling strategy enables us to avoid dealing with the full-dimensional orthogonal dynamics ${\rm e}^{\mathcal{Q} \mathcal{L} t}$ and establish an efficient data-driven construction of $\mb K(\mb q, t)$ presented in Sec. \ref{sec:data_driven_learning}. 
\end{remark}

\begin{algorithm}
\caption{Consensus-based enhanced sampling.}
\begin{algorithmic}
\REQUIRE{Initial sampling point $\mathbf{q}_{i,0}$, for $i=1,\ldots,N_{w}$}
\REQUIRE{Initial NN parameter $\theta_0$}
\REQUIRE{The number of training iterations $N_{train}$}
\REQUIRE The number of data collected $N_{data}$ in each training iteration
\STATE{$j \gets 0, t \gets 0 $}
\STATE $T \gets \lceil \frac{N_{data}}{N_{w}} \rceil $
\WHILE{$j < N_{train}$}
\WHILE{$t \leq T$}
\STATE calculate the mean force $\mathbf{F}(\bq_{i,t})$ and the training samples $S(\bq_{i,t})$ 
\STATE calculate the predicted force $\mathcal{F}_\theta(\mathbf{q}_{i,t}) = \nabla_\mathbf{q} X_{\mathcal{N}}(\mathbf{q}_{i,t}; \theta_j)$
\STATE $L^i \gets \mathcal{L}_\mathcal{N}(\mathbf{q}_{i,t})$
\STATE $w^i \gets \frac{\exp{(\kappa_l L^i})}{\sum_i \exp{(\kappa_l L^i})}$
\STATE $\mathcal{M}_{t+1} \gets \beta_1 \mathcal{M}_{t} + (1-\beta_1) \sum_i\mathbf{q}_{i,t} w^i$ 
\STATE $\mathcal{V}_{t+1} \gets \beta_2 \mathcal{V}_{t}+ (\kappa_l+\kappa_h)(1-\beta_2)\sum_i(\mathbf{q}_{i,t}-\mathbf{m}_t)^2 w^i$
\STATE $\mathcal{M} \gets \frac{\mathcal{M}_{t+1}}{1-\beta_1^t}$
\STATE $\mathcal{V} \gets \frac{\mathcal{V}_{t+1}}{1-\beta_2^t}$
\STATE $\mathbf{q}_{i,t+1} \gets \mathbf{q}_{i,t} - \frac{\delta t}{\gamma} (\mathbf{q}_{i,t}-\mathcal{M})\odiv\mathcal{V}+ \sqrt{\frac{2\delta t}{\gamma\kappa_h}}\mathbf{\eta}_{i,t}, \eta_{i,t} \sim \mathcal{N}(0,1)$
\STATE $t \gets t+1 $
\ENDWHILE
\STATE Save the training dataset $\mathcal{D}_j=\{\mathbf{q}_{i,t},\mathbf{F}(\bq_{i,t}),S(\bq_{i,t})\}_{t=0}^T$
\STATE Optimize $\theta_{j+1}$ using the generated training set $\mathcal{D}_l$ for $l=0,\ldots,j$.
\STATE $j \gets j+1 $
\ENDWHILE
\end{algorithmic}
\label{alg:cas}
\end{algorithm}

\subsection{Data-driven learning of the state-dependent memory}
\label{sec:data_driven_learning}

Let $\mathcal{S} = \left\{\mb q^{(k)}, \mb F\left (\mb q^{(k)}\right), S(\mb q^{(k)}) \right\}_{k=1}^{N_s}$ and $U(\mb q)$ denote the training set and the learned free energy function obtained from the CAS-based sampling process. The surrogate for the mass matrix $\mb m(\mb q)$ is constructed by minimizing the empirical discrepancy between the exact definition \eqref{eq:mass} and the velocity correlation function
\begin{equation}
\mathcal{L}_m = \sum_{k=1}^{N_s} \left \Vert \mb m(\mb q^{(k)}) - \beta^{-1} \mb C_{vv}(0, \mb q^{(k)}) \right \Vert^2,     
\end{equation}
where $\mb C_{vv}(t, \mb q^{(k)})$ is estimated from the sample set $S(\mb q^{(k)})$ introduced in Eq.~\eqref{eq:S_q_ast}.
In this subsection, we focus on the approximation of 
the memory term $\mb K(\mb q, t)$. 

To capture the state-dependent non-Markovian nature, we embed $\mb K(\mb q, t)$ into an extended Markovian system in terms of $\widehat{\mb z} = \left[\mb q; \mb p; \bm \chi\right]$, where $\bm \chi = \left [\bm \chi_1; \cdots; \bm \chi_n\right]$ represents a set of non-Markovian features and the individual feature $\bm \chi_i \in \mathbb{R}^{m}$ will be learned from the full model as specified later. The extended dynamics takes the general form 
\begin{equation}
{\rm d}
\begin{pmatrix} \mb q \\ \mb p \\ \bm \chi
\end{pmatrix} = \begin{pmatrix}
0 & \begin{matrix}\mb I & ~0 \end{matrix} \\
~\begin{matrix} -\mb I \\ 0 \end{matrix} & \mb J(\mb q)
\end{pmatrix} \nabla \widehat{F}(\mb q, \mb p, \bm \chi) {\rm d}t 
+ \begin{pmatrix}
0 & \begin{matrix} 0 & ~0 \end{matrix} \\
~\begin{matrix} 0 \\ 0 \end{matrix} & \mb \Sigma(\mb q)
\end{pmatrix}  {\rm d} \mb W_t,
\label{eq:extended_dynamics}
\end{equation}
where $\widehat{F}(\mb q, \mb p, \bm \chi)$ is the free energy function of the extended system defined by 
\begin{equation}
\widehat{F}(\mb q, \mb p, \bm \chi) = U(\mb q) + \frac{1}{2} \mb p^\top \mb m(\mb q)^{-1} \mb p  + \frac{1}{2} \beta^{-1} \ln {\det \left[\mb m(\mb q)\right]} + \frac{1}{2} \bm \chi^\top \bm \chi.
\label{eq:extend_free_energy}
\end{equation}
The matrices $\mb J(\mb q)$ and $\mb \Sigma(\mb q)$ govern the energy dissipation and the noise, and are constructed as
\begin{equation}
\mb J(\mb q) = \begin{pmatrix}
0 & -\mb H(\mb q) \\ 
\mb H(\mb q)^\top & \mb \Gamma (\mb q)
\end{pmatrix} \quad \mb \Sigma(\mb q) = \begin{pmatrix}
0 &0\\
0 &\mb D(\mb q)
\end{pmatrix},
\label{eq:J_Sigma}
\end{equation}
where $\mb H:\mathbb{R}^{m}\to\mathbb{R}^{n\times nm}$ encodes the coupling between the CG momentum and the auxiliary variables and $\bm\Lambda, \mb D:\mathbb{R}^{m}\to\mathbb{R}^{nm\times nm}$ characterize the dissipation and noise, respectively.
In particular, with the proper choice of the multiplicative noise, we can show that the reduced model \eqref{eq:extended_dynamics} retains a consistent invariant distribution function of the full MD system. 
\begin{proposition}
By choosing $\mb D(\mb q)\mb D(\mb q)^\top = - \beta^{-1}\left(\bm \Gamma (\mb q) + \bm \Gamma (\mb q)^\top\right)$, Eq. \eqref{eq:extended_dynamics} has an invariant distribution 
\begin{equation}
\rho_{\rm 0}(\mb q, \mb p, \bm \chi) \propto \frac{1}{(2\pi)^{m/2} \det [\mb m(\mb q)]^{1/2}}\exp\left[-\beta \left(U(\mb q) + \frac{1}{2} \mb p^\top \mb m(\mb q)^{-1} \mb p  + \frac{1}{2} \bm \chi^\top \bm \chi\right)\right]
\label{eq:extend_invariant}
\end{equation}
\label{prop:extended_dynamics}
\end{proposition}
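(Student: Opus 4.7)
The plan is to verify stationarity directly by showing that the Fokker--Planck operator of the extended SDE annihilates the proposed density. First I would rewrite the claimed invariant measure simply as $\rho_0 \propto \exp(-\beta\widehat{F}(\mb q,\mb p,\bm\chi))$, since the $\det[\mb m(\mb q)]^{-1/2}$ prefactor is exactly the contribution of the $\tfrac{1}{2}\beta^{-1}\ln\det\mb m(\mb q)$ term inside $\widehat F$. Next I would decompose the drift matrix as $A(\mb q)=J(\mb q)+\widetilde A(\mb q)$, where
\[
J(\mb q)=\begin{pmatrix} 0 & \mb I & 0\\ -\mb I & 0 & -\mb H(\mb q)\\ 0 & \mb H(\mb q)^\top & 0\end{pmatrix},\qquad \widetilde A(\mb q)=\begin{pmatrix} 0 & 0 & 0\\ 0 & 0 & 0\\ 0 & 0 & \bm\Gamma(\mb q)\end{pmatrix}.
\]
The diffusion also lives entirely in the $\bm\chi$-block, with $BB^\top=\operatorname{diag}(0,0,\mb D\mb D^\top)$. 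Hence it suffices to show (i) $-\nabla\!\cdot(J\nabla\widehat F\,\rho_0)=0$ and (ii) $-\nabla_{\bm\chi}\!\cdot(\bm\Gamma\bm\chi\,\rho_0)+\tfrac{1}{2}\sum_{i,j}(\mb D\mb D^\top)_{ij}\partial_{\chi_i}\partial_{\chi_j}\rho_0=0$.

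For the conservative block I would expand
\[
\sum_i\partial_i\big(J_{ij}\partial_j\widehat F\,\rho_0\big)=\sum_{i,j}\Big[(\partial_iJ_{ij})\partial_j\widehat F-\beta\,J_{ij}\partial_i\widehat F\,\partial_j\widehat F+J_{ij}\partial_i\partial_j\widehat F\Big]\rho_0.
\]
The last two sums vanish because $J=-J^\top$ is contracted against the symmetric rank-two tensors $\partial_i\widehat F\,\partial_j\widehat F$ and $\partial_i\partial_j\widehat F$. The first sum requires the divergence-free identity $\sum_i\partial_iJ_{ij}=0$ for every $j$, which I would verify block by block: the nonzero entries of $J$ are either constant ($\pm\mb I$) or depend solely on $\mb q$ ($\pm\mb H(\mb q),\mb H(\mb q)^\top$), and in every case the row index $i$ of a $\mb q$-dependent entry lies in the $\mb p$- or $\bm\chi$-block, so the differentiation kills it. This yields the Liouville-type invariance of $\rho_0$ under the Hamiltonian-like flow, independent of the specific form of $\widehat F$.

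For the dissipative--diffusive block, using $\nabla_{\bm\chi}\rho_0=-\beta\bm\chi\rho_0$ and the fact that $\mb D\mb D^\top$ depends only on $\mb q$, a direct calculation gives
\[
-\nabla_{\bm\chi}\!\cdot(\bm\Gamma\bm\chi\rho_0)+\tfrac{1}{2}(\mb D\mb D^\top)_{ij}\partial_{\chi_i}\partial_{\chi_j}\rho_0=\Big[-\tr\bm\Gamma+\beta\,\bm\chi^\top\bm\Gamma\bm\chi-\tfrac{\beta}{2}\tr(\mb D\mb D^\top)+\tfrac{\beta^2}{2}\bm\chi^\top(\mb D\mb D^\top)\bm\chi\Big]\rho_0.
\]
Substituting the prescribed fluctuation--dissipation relation $\mb D\mb D^\top=-\beta^{-1}(\bm\Gamma+\bm\Gamma^\top)$ and using $\bm\chi^\top\bm\Gamma^\top\bm\chi=\bm\chi^\top\bm\Gamma\bm\chi$ (a scalar equals its own transpose), the trace pieces cancel and so do the quadratic-in-$\bm\chi$ pieces, giving zero. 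Adding this to the conservative contribution yields $\mathcal{L}^{\ast}\rho_0=0$.

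The main obstacle is bookkeeping rather than technique: one must carefully track the mixed $(\mb q,\mb p)$ coupling induced by $\tfrac{1}{2}\mb p^\top\mb m(\mb q)^{-1}\mb p$ and $\tfrac{1}{2}\beta^{-1}\ln\det\mb m(\mb q)$ inside $\widehat F$, and confirm that the $\mb q$-dependence of $\mb H$ and $\bm\Gamma$ does not spoil the divergence-free identity $\sum_i\partial_iJ_{ij}=0$. Once those two structural facts are in hand, the remainder is the standard extended-Langevin/GENERIC-style computation, and the condition on $\mb D\mb D^\top$ is precisely the detailed-balance identity that makes the $\bm\chi$-marginal Gaussian.
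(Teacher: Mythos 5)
Your proposal is correct and follows essentially the same route as the paper's proof: identify $\rho_0\propto\exp(-\beta\widehat F)$, kill the antisymmetric (Hamiltonian-like) part of the drift via the divergence-free/antisymmetry contraction identity, and cancel the remaining symmetric $\bm\Gamma$-drift against the diffusion using the prescribed relation $\mb D\mb D^\top=-\beta^{-1}(\bm\Gamma+\bm\Gamma^\top)$. The only difference is presentational — you verify $\sum_i\partial_i J_{ij}=0$ and the $\bm\chi$-block cancellation by explicit expansion, whereas the paper states the corresponding identities more compactly and works with $\tfrac{1}{2\beta}(\widehat{\mb J}+\widehat{\mb J}^\top)\nabla\rho_0+\tfrac12\widehat{\bm\Sigma}\widehat{\bm\Sigma}^\top\nabla\rho_0\equiv 0$ directly.
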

\begin{proof}
For simple notation, we write Eq. \eqref{eq:extended_dynamics} as 
\begin{equation*}
{\rm d} \widehat{\mb z} = \widehat{\mb J}(\mb q) \nabla \widehat{F}(\widehat{\mb z}) {\rm d}t + \widehat{\bm \Sigma}(\mb q) {\rm d}\mb W_t.     
\end{equation*}
The Fokker-Planck equation is given by  
\begin{equation*}
\partial_t\rho = \nabla \cdot \left ( - \widehat{\mb J}(\mb q) \nabla \widehat{F}(\widehat{\mb z}) \rho + \frac{1}{2}\nabla \cdot\left(\widehat{\mb \Sigma} (\mb q)  \widehat{\mb \Sigma} (\mb q)^\top \rho \right ) \right). 
\end{equation*}
With $\mb J(\mb q)$ and $\bm \Sigma(\mb q)$ taking the form \eqref{eq:J_Sigma}, we can show that $\nabla \cdot \left(\left(\mb J(\mb q) - \mb J(\mb q)^\top\right) \nabla g(\widehat{\mb z})\right) \equiv 0 $ for any scalar function $g$ and $\nabla \cdot\left(\widehat{\mb \Sigma} (\mb q)  \widehat{\mb \Sigma} (\mb q)^\top \rho\right ) = \widehat{\mb \Sigma} (\mb q)  \widehat{\mb \Sigma} (\mb q)^\top \nabla \rho$. By plugging $\rho_0$ into the right-hand side, we have 
\begin{equation*}
\begin{split}
\nabla \cdot \left ( - \widehat{\mb J}(\mb q) \nabla \widehat{F}(\widehat{\mb z}) \rho_0 + \frac{1}{2}\nabla \cdot\left(\widehat{\mb \Sigma} (\mb q)  \widehat{\mb \Sigma} (\mb q)^\top \rho_0 \right ) \right) &=  \nabla \cdot \left ( \beta^{-1} \widehat{\mb J}(\mb q) \nabla \rho_0 + \frac{1}{2} \widehat{\mb \Sigma} (\mb q)  \widehat{\mb \Sigma} (\mb q)^\top \nabla \rho_0 \right) \\
&= \nabla \cdot \left ( \frac{\widehat{\mb J}(\mb q) + \widehat{\mb J}(\mb q)^\top}{2\beta} \nabla \rho_0 + \frac{1}{2} \widehat{\mb \Sigma} (\mb q)  \widehat{\mb \Sigma} (\mb q)^\top \nabla \rho_0 \right) \\
&\equiv 0 
\end{split}    
\end{equation*}
\end{proof}

By proposition \ref{prop:extended_dynamics}, it is natural to construct $\bm\Gamma(\mb q)$ by
\begin{equation*}
\mb \Gamma(\mb q) = -\mb L(\mb q) \mb L(\mb q)^{\top} + \mb J^a(\mb q), \quad \mb D(\mb q) = \beta^{-1/2}\mb L(\mb q)
\end{equation*}
where $\mb L(\mb q)$ is a lower-triangular matrix and $\mb J^{a}(\mb a)$ is a skew-symmetric matrix represented by neural networks.  As a special case, by taking $\mb H(\mb q)$ and  $\bm\Lambda(\mb q)$ as constant matrices, the reduced model \eqref{eq:extended_dynamics} recovers the standard GLE with a homogeneous kernel \cite{she2023data}. In this study, we do not make such a simplification; the embedded memory enables us to capture the heterogeneous energy dissipation process overlooked in the standard GLE.  

To train the reduced model, we establish a joint learning of both the auxiliary variables $\left\{\bm\chi_i\right\}_{i=1}^n$ and the functions $\mb H(\mb q)$ and $\bm\Lambda(\mb q)$. Specifically, $\bm\chi_i (t)$ essentially serves as a set of non-Markovian features that encode the unresolved dynamics orthogonal to $\mb z(t) = \mathcal{P}\mb Z(t)$. However, the direct construction in terms of $\mathcal{Q}\mb Z(t)$ can be computationally expensive due to the high-dimensionality of $\mb Z(t)$. Alternatively, one important observation is that the time history of $\mb p(t)$ naturally encodes the unresolved dynamics and can be readily obtained from the training set $\mathcal{S}$ discussed in Sec. \ref{sec:CAS}. Hence, we construct $\bm\chi_i (t)$ by 
\begin{equation}
\begin{split}
\bm \chi_i(t)  &= \int_0^{\infty} \bm \omega_i(t) \mb p(t-\tau) \diff \tau  \\
&\approx \sum_{j=1}^{N_w} \mb w_{ij} \mb p(t-t_j), \quad i = 1, \cdots, n,
\end{split}
\label{eq:chi_encoder}
\end{equation}
where $\bm\omega_i \in \mathbb{R}^{+} \to \mathbb{R}^{m\times m}$ is the non-Markovian encoders and $\left\{\mb w_{ij}\right\}_{j=1}^{N_w}$ represents the discrete weights whose values will be determined. 

To proceed, we multiply Eq. \eqref{eq:extended_dynamics} by $\mb v(0) := \dot{\mb q}(0)$ and take the conditional expectation with $\mb q(0) = \mb q^{\ast}$. The correlation functions follow 
\begin{equation}
\begin{split}
\frac{\diff }{\diff t}\left\langle \widehat{\mb z}(t) \mb v(0)^{\top} \vert \mb q^{\ast}\right\rangle &= \left\langle \mb J(\mb q(t)) \nabla \widehat{F}(\widehat{\mb z}(t)) \mb v(0)^\top \big \vert \mb q^{\ast} \right\rangle \\
&\approx \mb J(\mb q^{\ast}) \left\langle  \nabla \widehat{F}(\widehat{\mb z}(t)) \mb v(0)^\top \big \vert \mb q^{\ast} \right\rangle
\end{split}
\label{eq:J_separation}
\end{equation}
for $\widehat{\mb z}$ taking $\mb p$ and $\bm \chi$. In particular, the approximation of the second equation is based on the assumption that the position correlation function $\mb C_{qq}(t)$ decays much slower than the velocity correlation function $\mb C_{vv}(t)$ for the CG variables. The separation enables us to efficiently evaluate Eq. \eqref{eq:J_separation} with pre-computed correlation functions
rather than the on-the-fly computation from the time-series samples of $\mb q$ and $\mb p$; see Fig. \ref{fig:2D_fes_corr} for the numerical verification of this assumption. By using Eq. \eqref{eq:J_Sigma}, we have
\begin{equation}
\begin{split}
\frac{\diff}{\diff t}
\underbrace{
\begin{pmatrix}
{\left\langle \mb p (t) \mb v(0)^\top \vert \mb q^{\ast}\right\rangle} \\
{\left\langle \bm\chi_1 (t)  \mb v(0)^\top \vert \mb q^{\ast}\right\rangle} \\
\vdots \\
\left\langle {\bm\chi}_{n} (t)  \mb v(0)^\top \vert  \mb q^{\ast} \right\rangle
\end{pmatrix}
}_{\mb C_1(t,{\mb q}^{\ast})}
+
\underbrace{
\begin{pmatrix}
{\left\langle \nabla_{\mb q(t)} F(\mb q, \mb p) \mb v(0)^{\top} \vert \mb q^{\ast}\right\rangle} \\
0 \\
\vdots \\
0
\end{pmatrix}
}_{\mb C_0(t,{\mb q}^{\ast})}
&= \mb J(\mb q^{\ast}) \underbrace{ 
\begin{pmatrix}
{\left\langle \mb{v} (t) \mb v(0)^{\top} \vert \mb q^{\ast} \right\rangle} \\
{\left\langle \bm\chi_1 (t) \mb v(0)^{\top} \vert \mb q^{\ast} \right\rangle} \\
\vdots \\
{\left\langle  \bm\chi_{n}(t) \mb v(0)^{\top} \vert \mb q^{\ast} \right\rangle}
\end{pmatrix}  
}_{\mb C_2(t, {\mb q}^{\ast})},
\label{eq:correlation_evolution}
\end{split}
\end{equation}
where the correlation $\left\langle \bm\chi_i(t)\mb v(0)^\top \vert \mb q^{\ast}\right\rangle$ can be obtained from the non-Markovian weights $\mb w_{ij}$ and $\mb C_{pv}(t; \mb q^{\ast})$ by Eq. \eqref{eq:chi_encoder}.  Accordingly, the correlation matrices $\mb C_0(t, {\mb q}^{\ast})$, $\mb C_1(t, {\mb q}^{\ast})$ and $\mb C_2(t, {\mb q}^{\ast})$ can be directly evaluated for each $\mb q^{\ast}$ in the training set. This enables us to establish a joint learning of both the non-Markovian weights $\left\{\mb w_{ij}\right\}_{i,j=1}^{n,N_w}$ and the function $\mb J(\mb q)$ (i.e., $\mb H(\mb q)$, $\mb L(\mb q)$ and $\mb J^{a}(\mb q)$) by minimizing the empirical loss function 
\begin{equation*}
\begin{split}
L_k &= \lambda_{c} L_{c} +  \lambda_{\Lambda} L_{\Lambda} \\
L_c &= \sum_{k=1}^{N_s} \sum_{j=1}^{N_t}  \left\Vert \frac{\diff}{\diff t} {\mb C_1}(t_j, \mb q^{(k)}) + \mb C_0(t_j, \mb q^{(k)}) -
\mb J (\mb q^{(k)}) {\mb C_2}(t_j, \mb q^{(k)})\right\Vert^2 \\
L_{\Lambda} &=  \sum_{k=1}^{N_s} \left \Vert \bm \Lambda (\mb q^{(k)}) - \mb I \right \Vert^2  \quad \bm \Lambda(\mb q^{\ast}) = \left\langle \bm \chi(t) \bm \chi(t)^\top \vert \mb q(t) = \mb q^{\ast} \right\rangle,
\end{split}
\end{equation*}
where $\lambda_c$ and $\lambda_{\Lambda}$ are hyperparameters. The second loss term imposes a constraint such that the covariance matrix of the auxiliary variables $\bm\chi$ constructed from the full model is close to $\mb I$, which is consistent with the pre-assigned free energy function $\widehat{F}(\mb q, \mb p, \bm \chi)$ and the invariant distribution $\rho_0(\mb q, \mb p, \bm \chi)$ of the reduced model specified in Eqs. \eqref{eq:extend_free_energy} and \eqref{eq:extend_invariant}. We use the ADAM optimizer \cite{Kingma_Ba_Adam_2015} to train the model and refer to \ref{app:training} for the details.

\section{Numerical Result}
In this section, we examine the effectiveness of the constructed reduced models in comparison with the full MD model. We use SD-GLE to denote the present state-dependent GLE model and SI-GLE to denote the standard state-independent GLE. For fair comparison, both reduced models use the same free energy function $U(\mb Q)$.   

\subsection{Polymer Chain}
Let us start with a polymer chain consisting of $N=16$ atoms. The full atomistic potential function is governed by standard intramolecular interactions, including non-bonded Lennard-Jones, bond stretching, angular bending, and dihedral torsion potential functions; see  \ref{app:polymer} for details. We choose the CG coordinate as the end-to-end distance defined by the Euclidean distance between the first and last atoms of the polymer chain $q = \|\bQ_1-\bQ_N\|$. Accordingly, the mass $\mb m$ is independent of $q$ since $\left\langle \dot{q}\dot{q} \right\rangle = \beta^{-1}(M_1^{-1} + M_{N}^{-1})$. The training set consists of 25 distinct configurations sampled from the atomistic simulations. The number of auxiliary variables $\left\{\bm\chi_i\right\}_{i=1}^n$is set to be $n=4$ for both the SI-GLE and SD-GLE models. 

First, we examine the overall velocity autocorrelation function defined as $C_{vv}(t) = \left\langle \dot q(t)\dot q(0)\right\rangle$. 
As shown in Fig. \ref{fig:1D_molecule_system_auto},  the predictions from both reduced models show good agreement with the full MD results, with the present SD-GLE model yielding improved accuracy in the intermediate regime for $t \sim O(1)$. This result is not surprising since $C_{vv}(t)$ represents a statistical property taking the ensemble average over the full CG space and can be well captured by the standard GLE with a homogeneous memory function. 

On the other hand, the differences between the two reduced models become more pronounced by examining the conditional velocity correlation $C_{vv}(t, q^{\ast}) = \left\langle \dot q(t)\dot q(0) \vert q(0) = q^{\ast} \right\rangle$. Fig. \ref{fig:1D_molecule_system_sd_auto} shows the obtained $C_{vv}(t, q^{\ast})$ for representative values $q^{\ast}$ ranging from $3$ to $16$. The apparent dispersive behavior indicates pronounced heterogeneous energy dissipation manifested by the state-dependent memory. The standard SI-GLE is insufficient to capture such behavior. In contrast, the present SD-GLE shows good agreement with the full MD results. 

\begin{figure}
    \centering
    \includegraphics[width=0.39\linewidth]{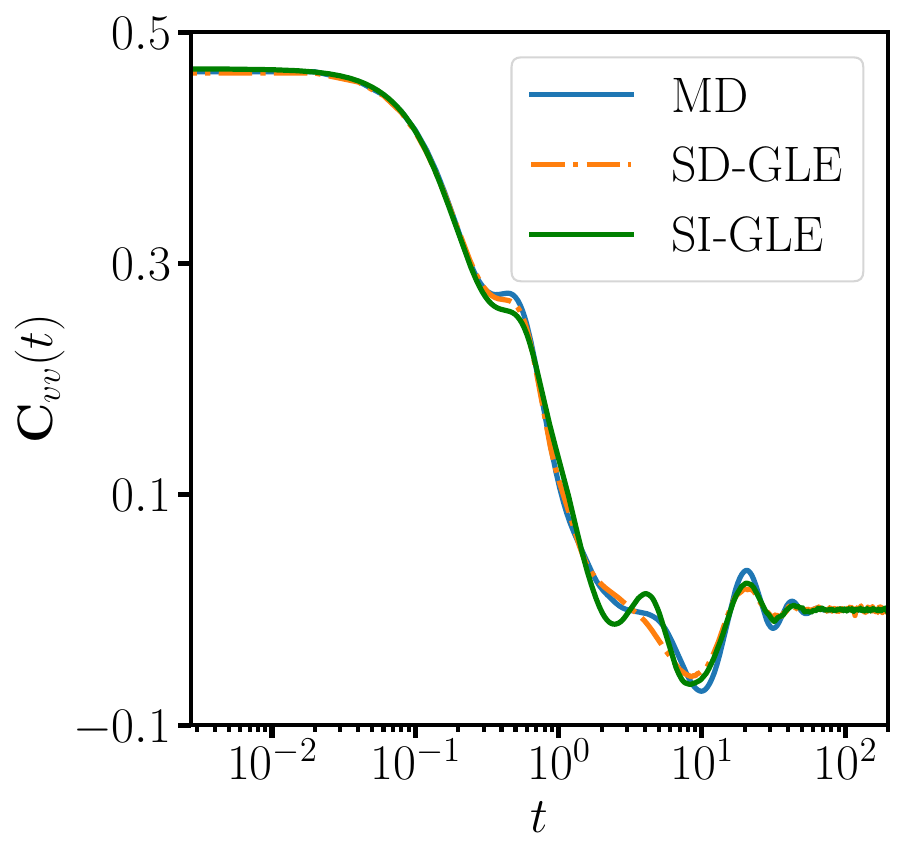}
    \caption{Velocity autocorrelation function $C_{vv}(t)$ obtained from the full MD, the present model (SD-GLE) and the standard GLE (SI-GLE), where the CG coordinate $q$ is defined as the end-to-end distance of a polymer molecule.}
    \label{fig:1D_molecule_system_auto}
\end{figure}

Finally, we study the rare event behavior for the large extension of the polymer configuration. In particular, we use $\rho(T\vert q> 15)$ to denote the distribution of the time duration of the polymer with $q > 15$. As shown in Fig. \ref{fig:1D_molecule_system_rare_event}, the prediction of the present SD-GLE model shows good agreement with the full MD results. In contrast, the prediction of the standard SI-GLE shows apparent deviations, which indicates the crucial role of the state-dependent memory function for capturing the conformation dynamics of the molecular system on the collective scale.

\begin{figure}
    \centering
    \includegraphics[width=0.95\linewidth]{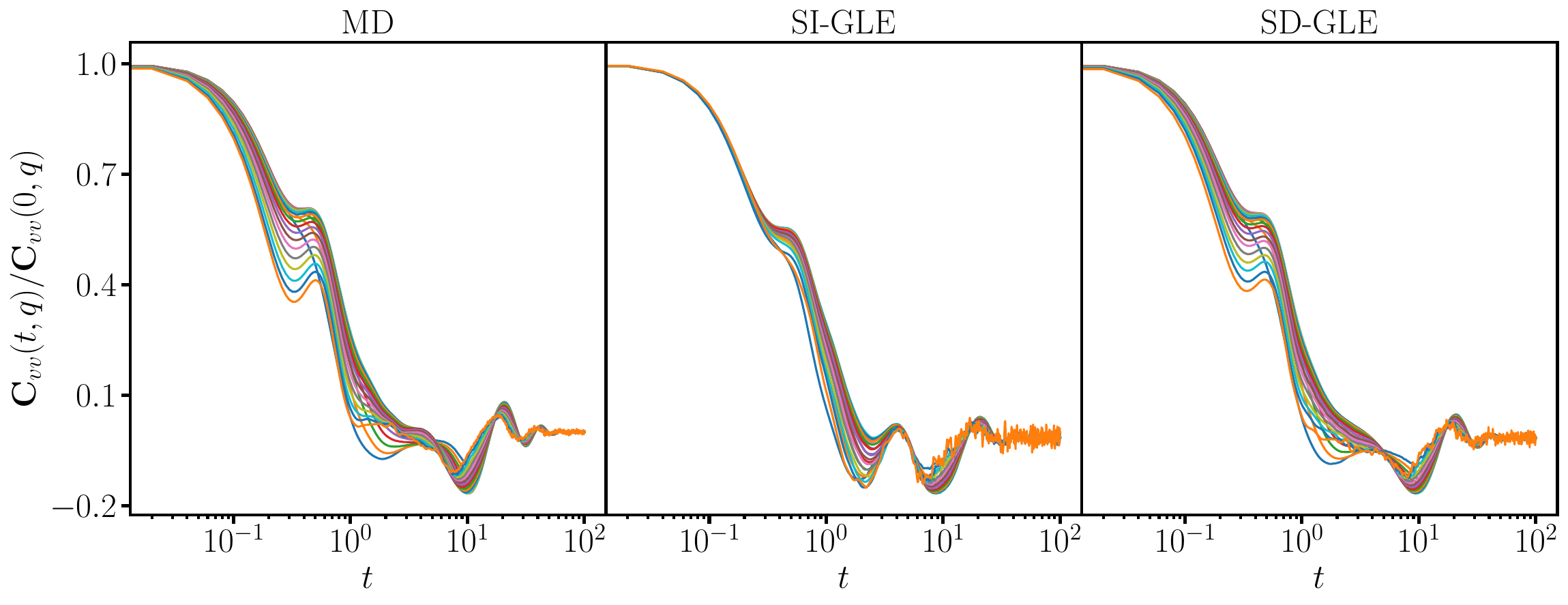}
    \caption{Conditional velocity autocorrelation function $C_{vv}(t, q^{\ast})$ of a polymer molecule with the initial state of the end-to-end distance $q(0) = q^{\ast}$. Each curve corresponds to the correlation function taking a specific value ranging between $3$ and $16$. The dispersive behavior indicates the pronounced state-dependent memory effect over the CG space.}
    \label{fig:1D_molecule_system_sd_auto}
\end{figure}

\begin{figure}
    \centering
    \includegraphics[width=0.36\linewidth]{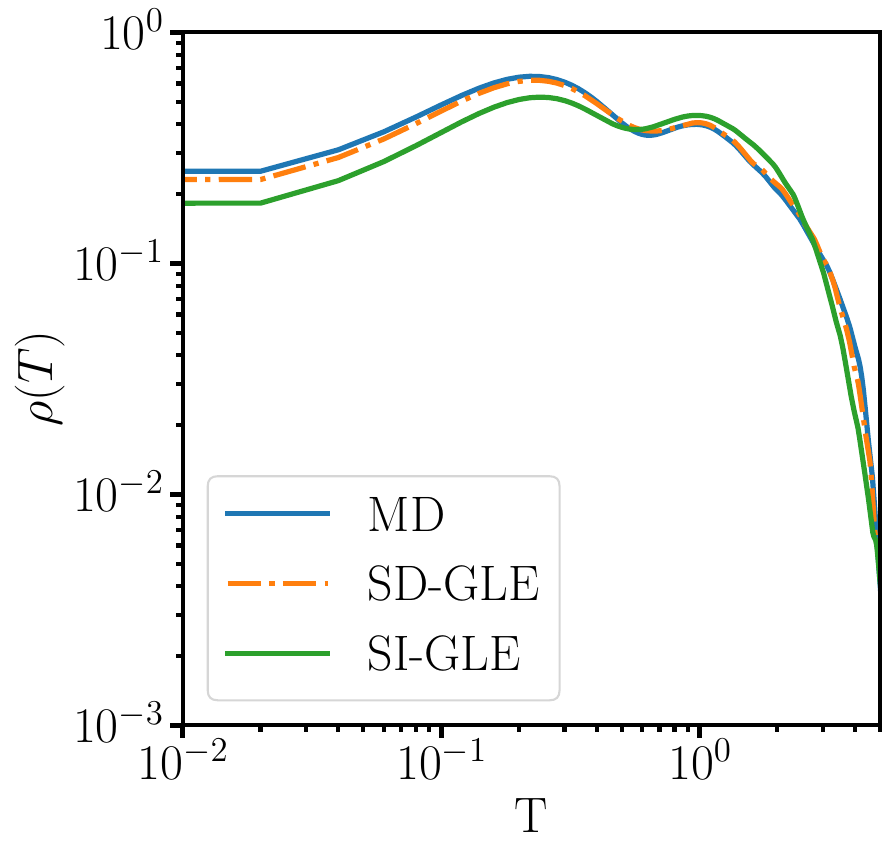}
    \caption{The rare-event distribution of the time duration for the polymer molecule under a large extension with the end-to-end distance $q>15$.}
    \label{fig:1D_molecule_system_rare_event}
\end{figure}

\subsection{Alanine Dipeptide}

Next, we consider an alanine dipeptide molecule solvated in $384$ water molecules at $300$ K. The AMBER99SB force field \cite{hornak2006comparison} is used to model the intramolecular and intermolecular interactions, and the TIP3P water model is used to model the solvent; we refer to \ref{app:ala_mol} for the details of the MD setup.

We choose the CG coordinates $\mb q = [\phi, \psi]$, where $\phi$ and $\psi$ represent the two backbone torsion angles. We use $10$ random walkers to sample the configuration space. The initial positions of the walkers at the $k\mhyphen$th iteration are designated as the terminal states of the preceding iteration. The inverse low and high temperatures are set to $\kappa_l=10$ and $\kappa_h=1$, respectively. For each sample point, we conduct $2$ ps of restrained dynamics to calculate the average force and mass, followed by $4$ ps of unrestrained dynamics to determine the conditional correlation functions. We use $n=18$ auxiliary variables to construct both the SI-GLE and SD-GLE models.

Fig. \ref{fig:2D_fes_corr} shows the constructed free energy function $U(\mb q)$ and the overall correlation functions $C_{qq}(t)$ and $C_{vv}(t)$. The four red points represent the individual metastable states separated by energy barriers where the transition among these states will be systematically investigated. In addition, the position correlation function $C_{qq}(t)$ decays much slower than the velocity correlation $C_{vv}(t)$, which verifies the separation of $\mb J(\mb q)$ from the conditional correlation function in Eq. \eqref{eq:J_separation}.  Fig. \ref{fig:2D_mass} shows the components of the mass matrix $\mb m(\mb q)$. Unlike the above polymer system, $\mb m(\mb q)$ exhibits a non-negligible state-dependent nature where the off-diagonal component $\mb m_{\phi, \psi}$ is about $10\%$ of the diagonal components. 

To further examine the localized memory effects, Figs. \ref{fig:2D_corr_123}-\ref{fig:2D_corr_999} shows the obtained conditional velocity correlation functions $C_{vv}(t, \mb q^{\ast})$ with $\mb q^{\ast}$ taking various metastable points shown in Fig. \ref{fig:2D_fes_corr}. For all the cases, the predictions from the present SD-GLE model show better agreement with the full MD results than the standard SI-GLE model. The improvement becomes even more pronounced for the conditional position correlation functions $C_{qq}(t, \mb q^{\ast})$, as shown in Figs. \ref{fig:2D_corr_qq_123}-\ref{fig:2D_corr_qq_999}. 


Finally, we examine the transition dynamics among the individual metastable states by considering the transition times between metastable states defined by 
\begin{equation}
\begin{split}
T_{A\to B} &= \inf \left\{t > 0 \,\middle|\, \mb q(t) \in B,\ \mb q(s) \notin A,~ \forall s \in (0, t) \right\}, \quad  \mb q(0) \in A. 
   \\
\end{split}
\label{eq:transition_time}
\end{equation}
Essentially, $T_{A\to B}$ is the committor-based transition time \cite{E_Vanden_ARPC_2010} that measures the first arrival at 
B after the system has last exited A, without re-entering A along the way. It characterizes the duration of the actual transition event without the recrossings.  
Fig. \ref{fig:2D_transition} shows the distribution of the transition time obtained from the full MD and the two reduced models. 
The SI-GLE model with a spatially homogeneous memory is insufficient to account for the local dynamical environments across the CG space. Consequently, it cannot capture heterogeneous dissipation and memory effects that are crucial for accurately resolving metastable transitions. 
In contrast, the present SD-GLE model shows consistent improvement in capturing these dynamical behaviors, which demonstrates the crucial role of the state-dependent memory term in predicting the non-equilibrium properties of molecular systems at the CG level.

\section{Summary}

We have presented a data-driven framework for constructing stochastic reduced models with state-dependent memory that accurately capture both equilibrium and non-equilibrium dynamics of molecular systems. The resulting SD-GLE model extends the standard GLE formulation by incorporating a memory kernel that varies across the CG space, thereby accounting for heterogeneous energy dissipation and complex metastable transitions.
To address the dual challenges of efficient sampling and non-Markovian modeling, we adopt a consensus-based sampling method that enables the efficient exploration of the phase space and the concurrent collection of free energy gradients, mass matrices, and conditional correlation functions through restrained dynamics. This shared-sampling approach makes the construction of the SD-GLE model a natural extension of existing GLE-based workflows. To approximate the state-dependent memory, we formulate an extended Markovian representation by embedding a set of auxiliary variables into the reduced dynamics. The memory term is learned jointly with these non-Markovian features through a conditional correlation matching scheme that requires only two-point statistics.

We demonstrate the accuracy and effectiveness of the proposed SD-GLE model on both a polymer chain and an alanine dipeptide molecule in solvent. The numerical results show clear improvements over standard GLE models in predicting the non-equilibrium properties, especially for the rare events and the transition dynamics. The present framework naturally extends existing workflows based on free energy modeling and provides a systematic framework towards data-driven modeling of state-dependent memory effects in complex molecular systems. Future directions include generalizing the framework to higher-dimensional CG representations and extending the methodology to non-equilibrium systems under external force fields \cite{Lyu_Lei_SIAM_MMS_2025}. Also, it could be interesting to incorporate the approximation residual of the memory term into the adaptive sampling process. We leave these for future studies.

\begin{figure}
    \centering
    \includegraphics[width=0.385
\linewidth]{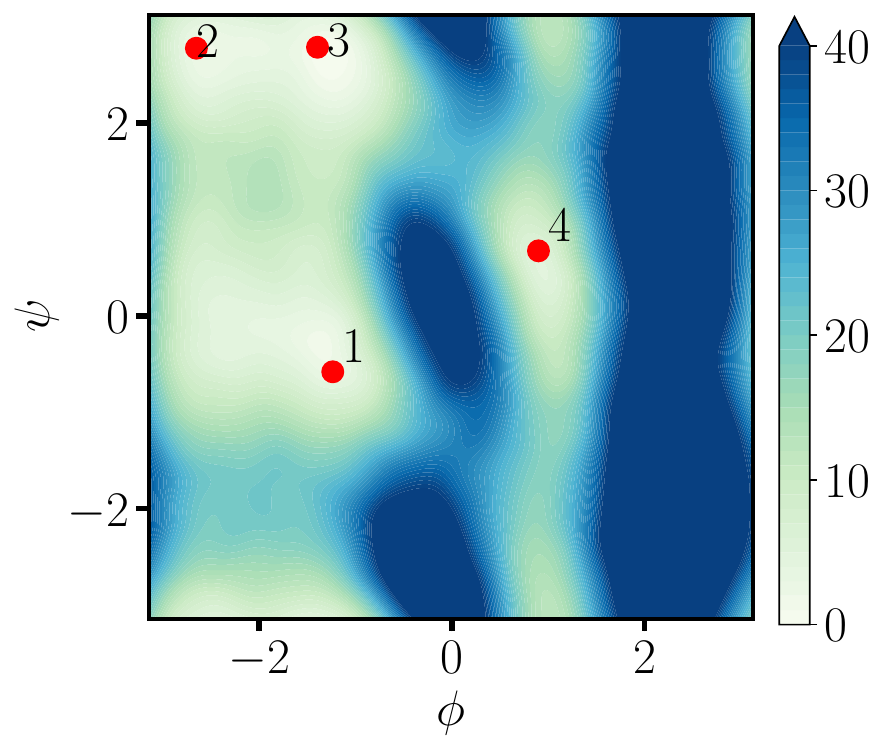}
    \includegraphics[width=0.33\linewidth]{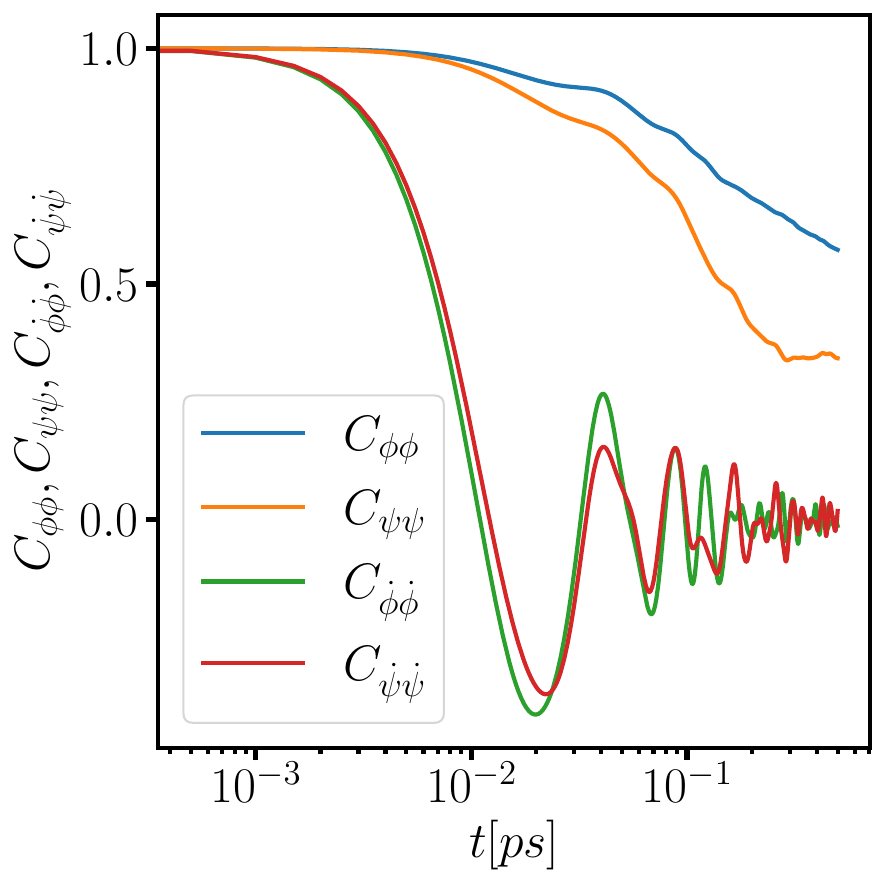}
    \caption{Left: The 2D free energy function $U(\mb q)$ for the reduced model of an alanine dipeptide molecule in solvent, where the CG coordinates $\mb q = \left[\phi, \psi\right]$ denote the two backbone torsion angles of the molecule. The four red points denote the individual metastable states separated by energy barriers. Right: Overall position correlation function $C_{qq}(t)$ and the velocity correlation function $C_{vv}(t)$ normalized by the initial values.}
    \label{fig:2D_fes_corr}
\end{figure}

\begin{figure}
    \centering
    \includegraphics[width=0.9\linewidth]{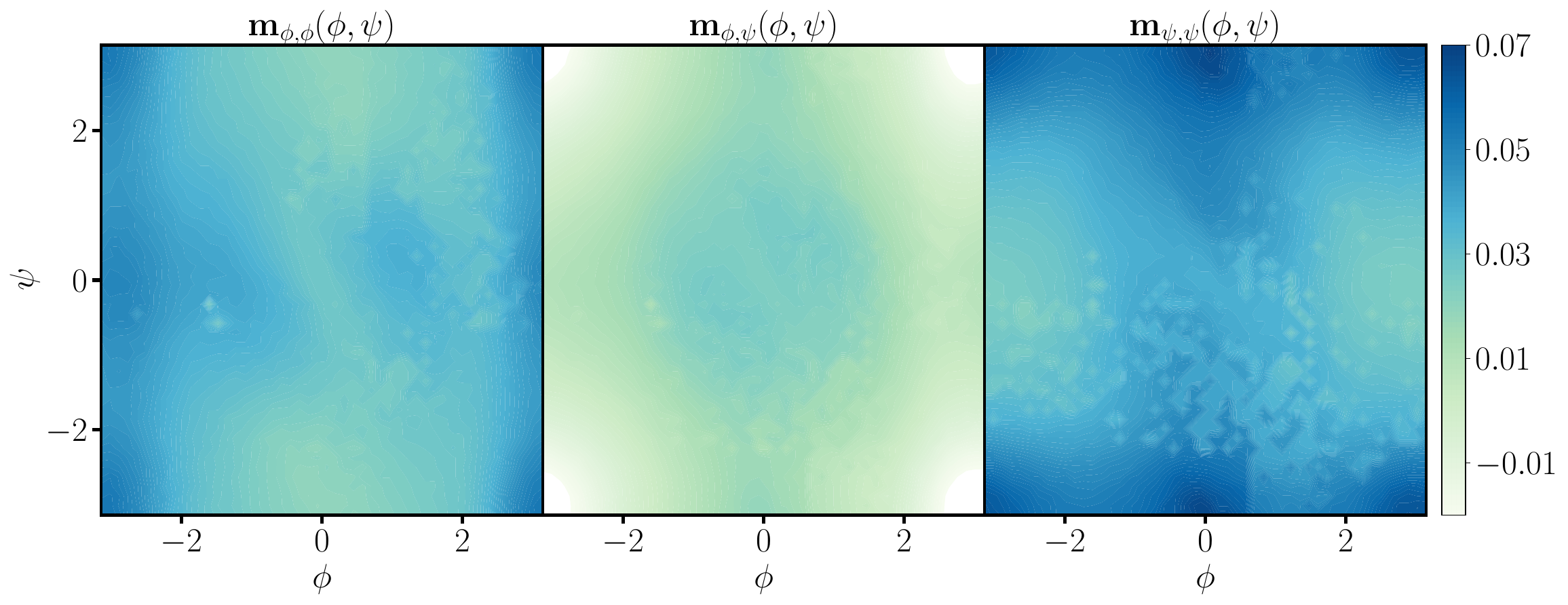}
    \caption{The 2D state-dependent mass matrix $\mb m (\mb q)$ for the reduced model of an alanine dipeptide molecule in solvent.}
    \label{fig:2D_mass}
\end{figure}

\begin{figure}
    \centering
    \includegraphics[width=0.85\linewidth]{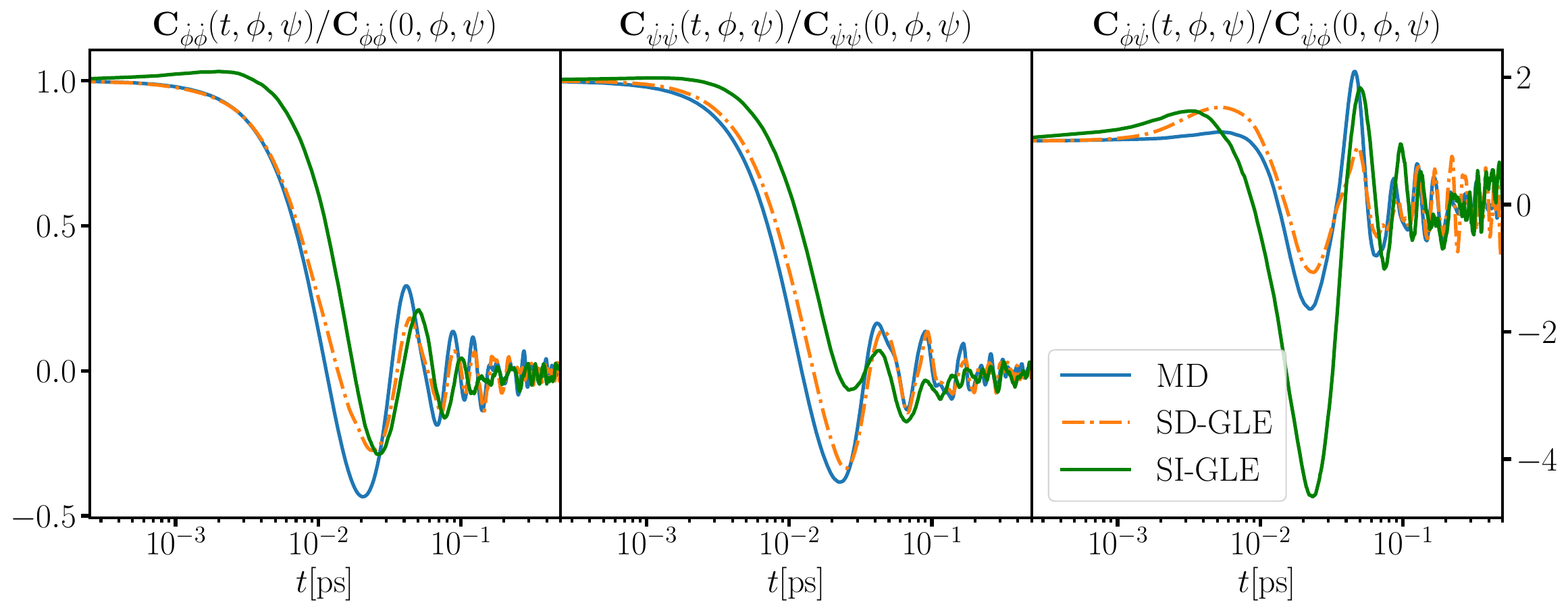}
    \caption{The conditional velocity correlation function $\mb C_{vv}(t, \phi, \psi)$ for both the diagonal components (left axis) and off-diagonal component (right axis) at a metastable point  $\phi=-1.594,\psi=2.785$.}
    \label{fig:2D_corr_123}
\end{figure}

\begin{figure}
    \centering
    \includegraphics[width=0.85\linewidth]{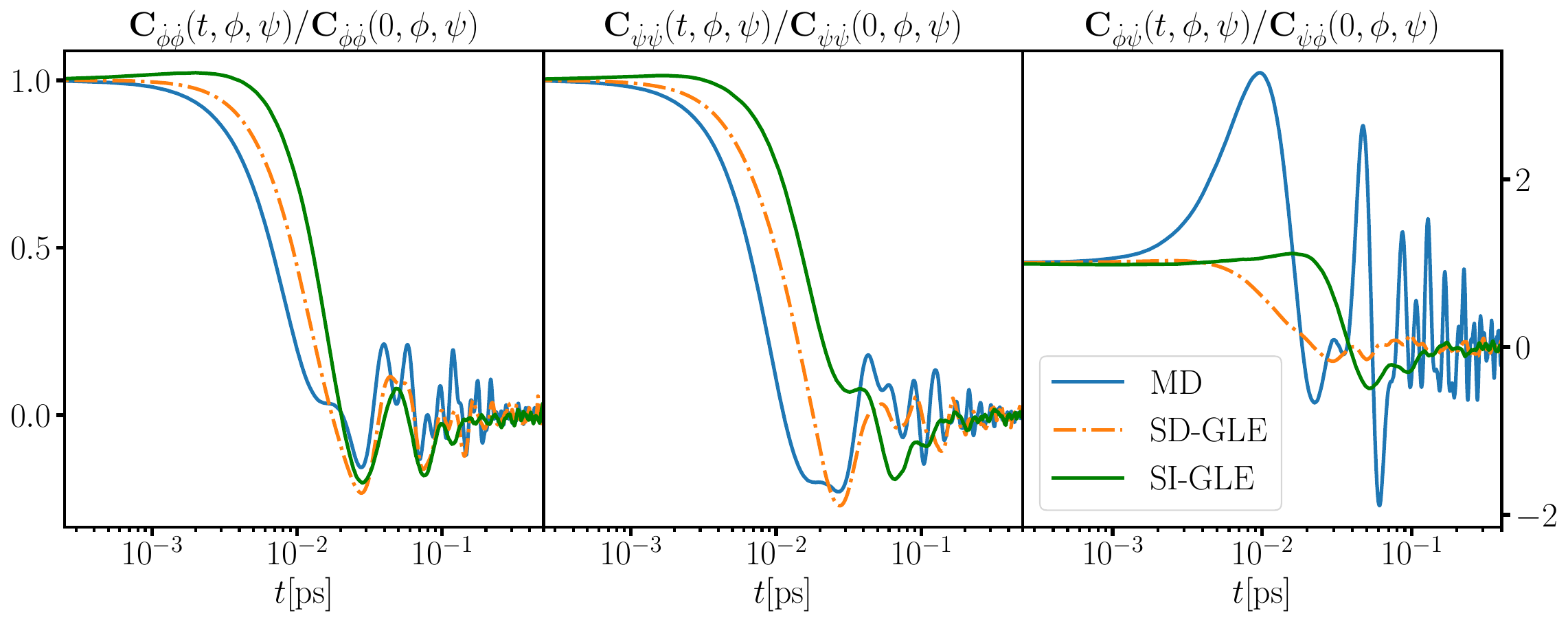}
    \caption{The conditional velocity correlation function $\mb C_{vv}(t, \phi, \psi)$ at a metastable point $\phi=-2.751,\psi=2.776$. The normalized off-diagonal component exhibits a large discrepancy due to a small magnitude of $C_{\dot{\phi}\dot{\psi}}(0, \phi, \psi) \sim \mathcal{O}(10^{-2})$.}
    \label{fig:2D_corr_772}
\end{figure}

\begin{figure}
    \centering
    \includegraphics[width=0.85\linewidth]{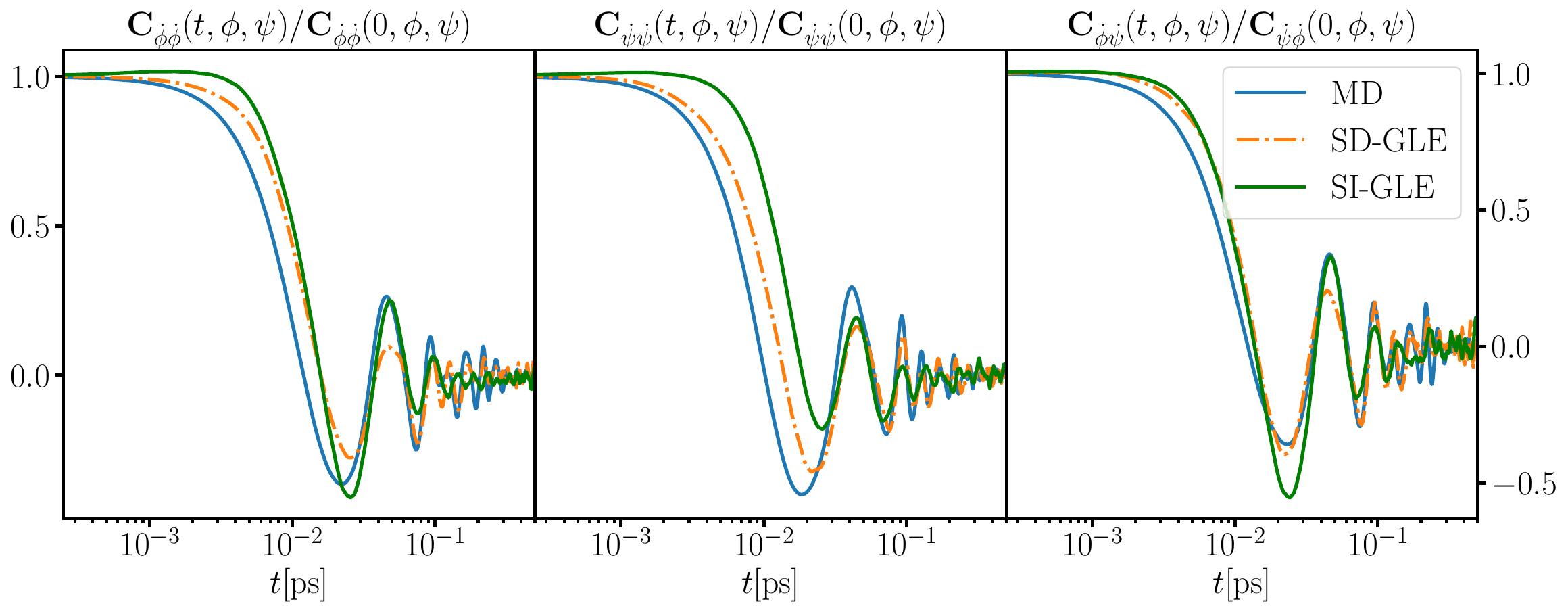}
    \caption{The conditional velocity correlation function $\mb C_{vv}(t, \phi, \psi)$ at a metastable point $\phi=-1.236,\psi=-0.578$.}
    \label{fig:2D_corr_999}
\end{figure}

\begin{figure}
    \centering
    \includegraphics[width=0.62\linewidth]{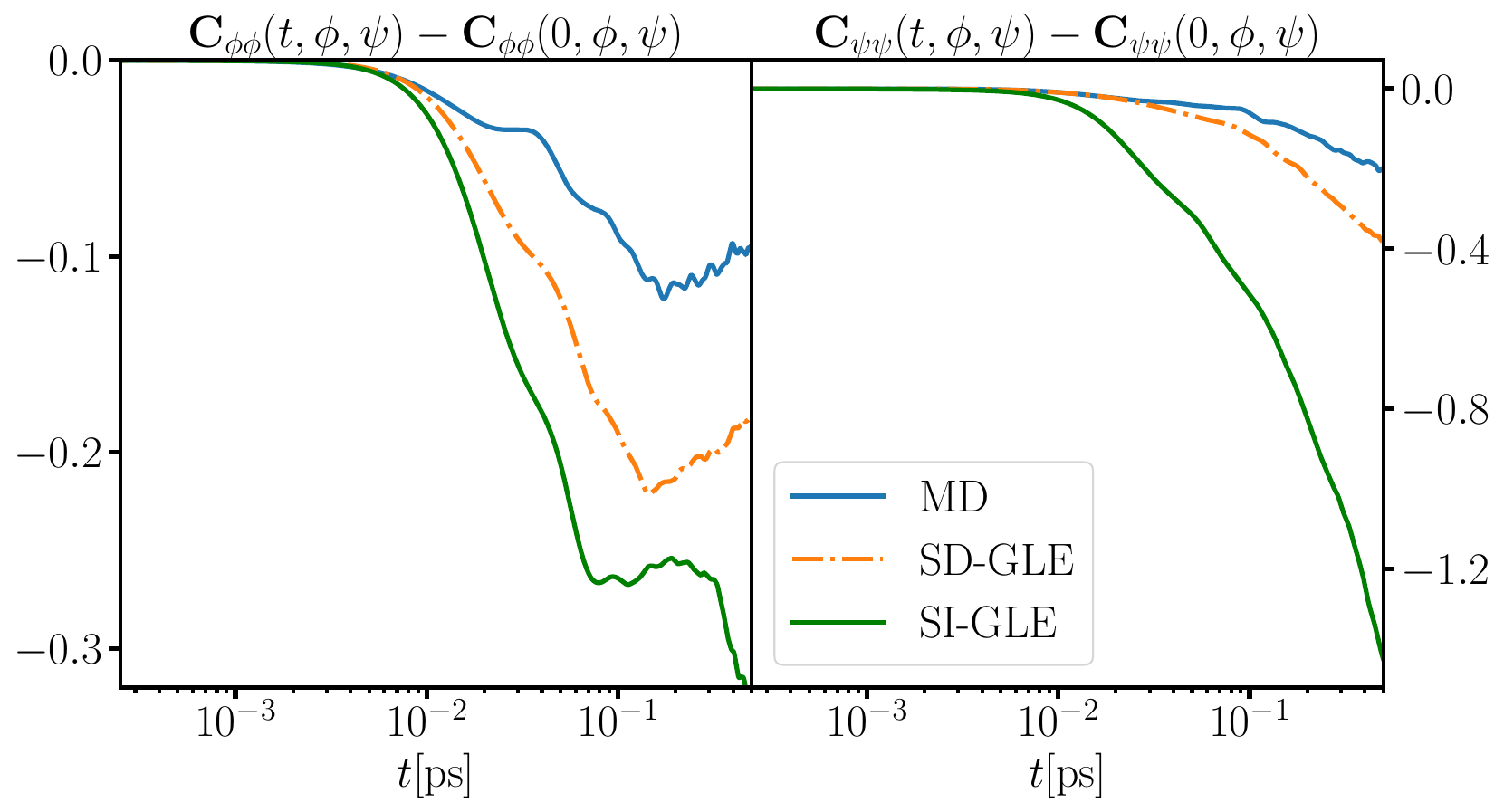}
    \caption{The conditional position correlation function $\mb C_{qq}(t, \phi, \psi)$ at a metastable point at $\phi=-1.594,\psi=2.785$.}
    \label{fig:2D_corr_qq_123}
\end{figure}

\begin{figure}
    \centering
    \includegraphics[width=0.62\linewidth]{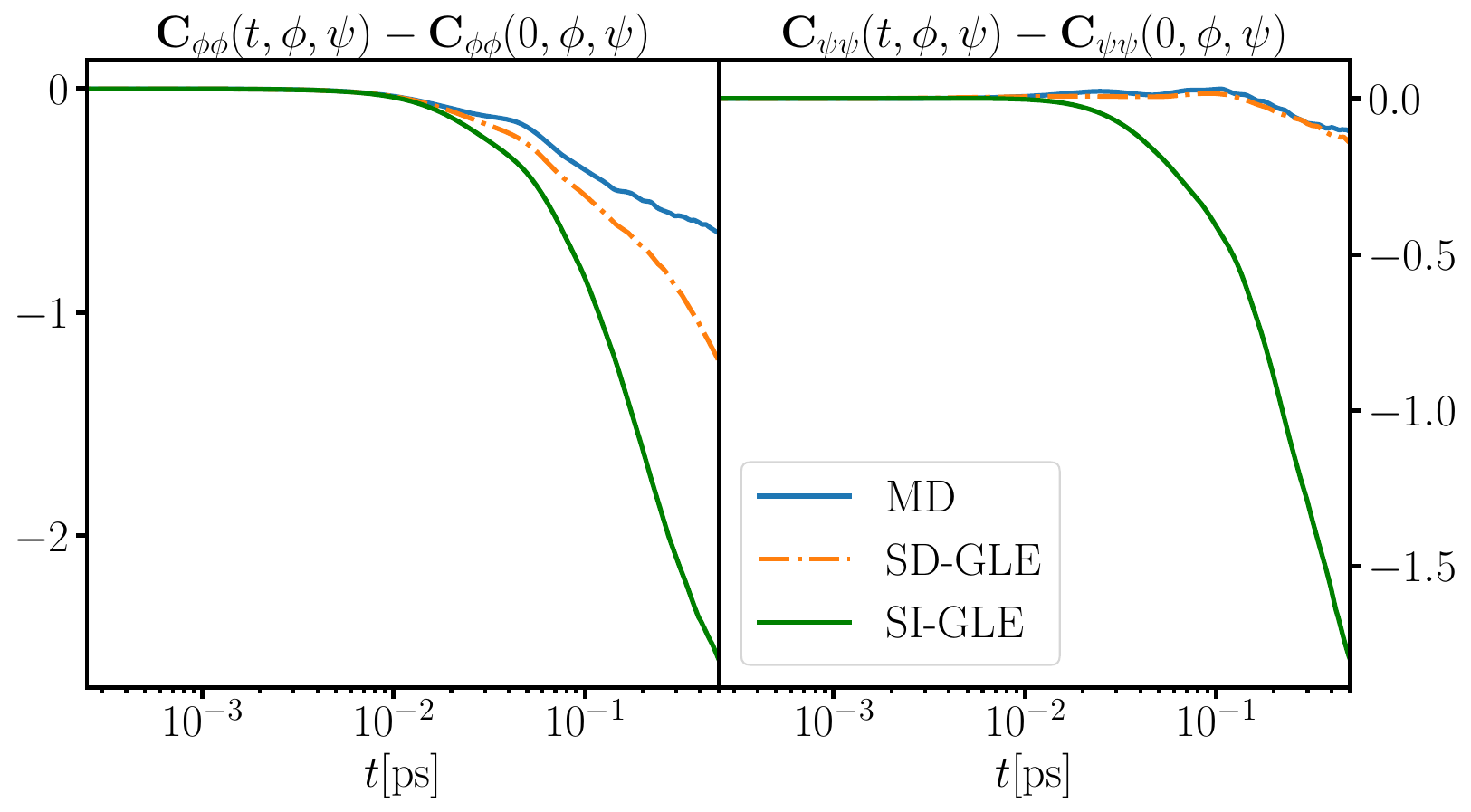}
    \caption{The conditional position correlation function $\mb C_{qq}(t, \phi, \psi)$ at a metastable point $\phi=-2.751,\psi=2.776$.}
    \label{fig:2D_corr_qq_772}
\end{figure}

\begin{figure}
    \centering
    \includegraphics[width=0.62\linewidth]{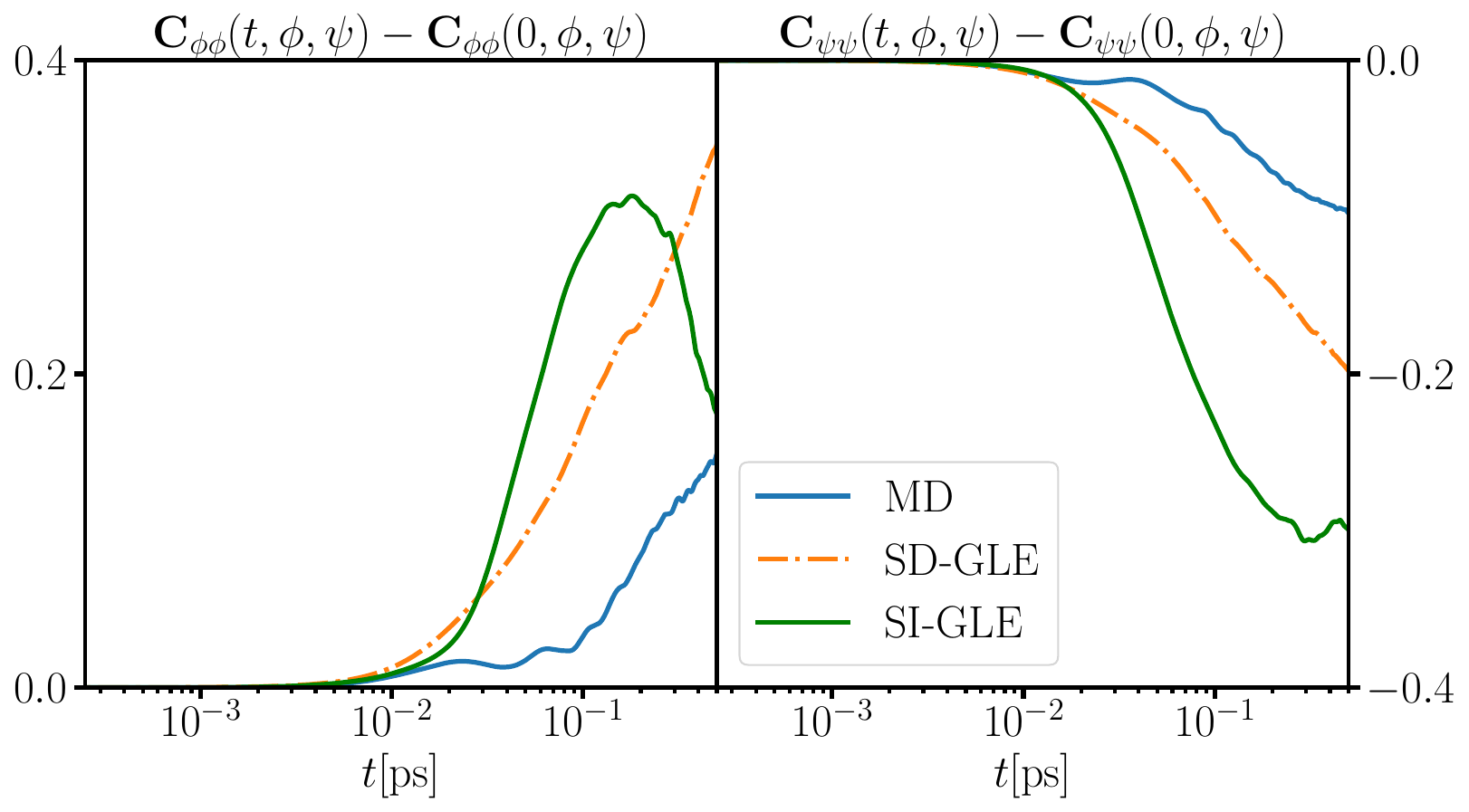}
    \caption{The conditional position correlation function $\mb C_{qq}(t, \phi, \psi)$ at a metastable point $\phi=-1.236,\psi=-0.578$.}
    \label{fig:2D_corr_qq_999}
\end{figure}

\begin{figure}
    \centering
    \includegraphics[width=0.88\linewidth]{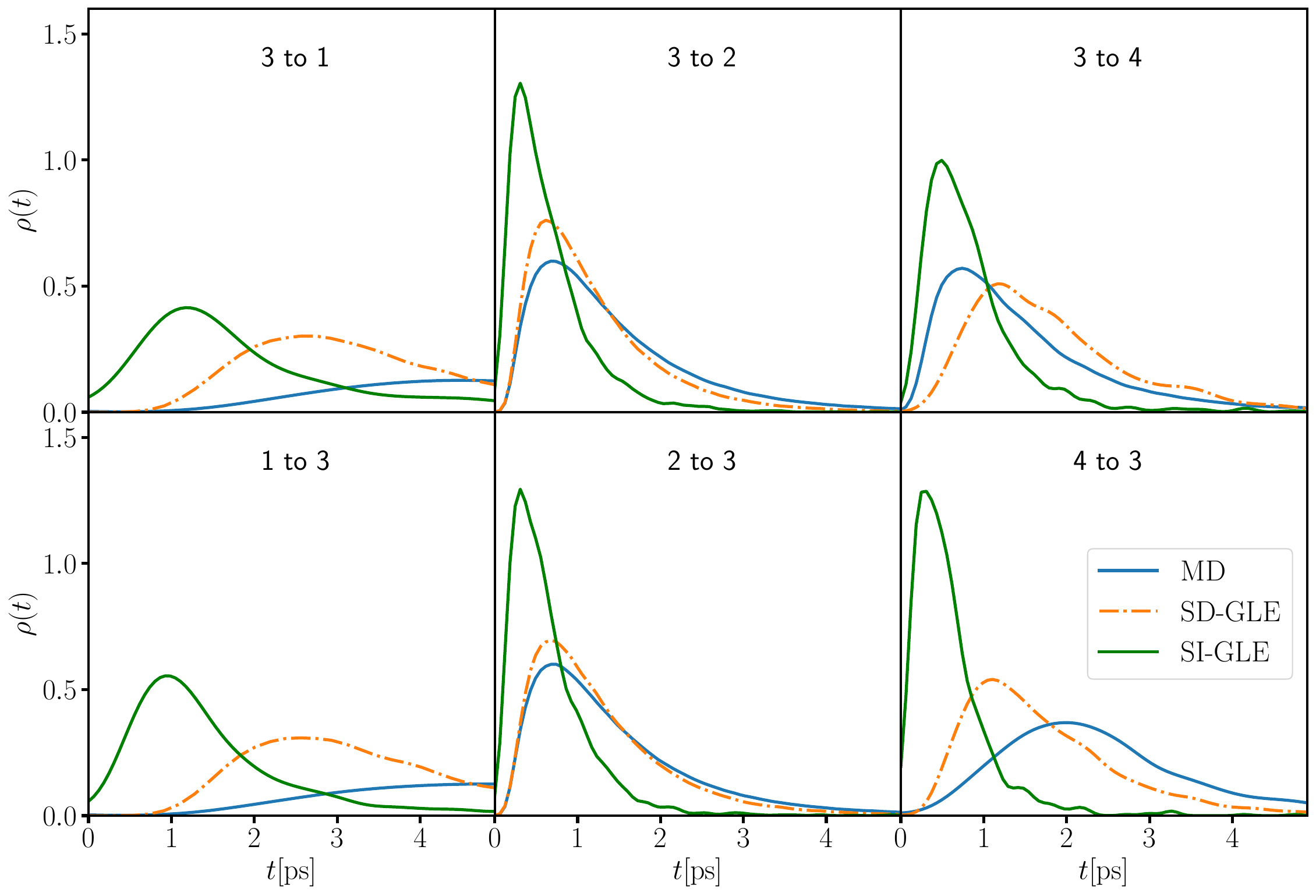}
    \caption{The distribution of the committor-based transition time (defined by Eq. \eqref{eq:transition_time}) that characterizes the 
    duration of the actual transition without re-crossing .}
    \label{fig:2D_transition}
\end{figure}

\appendix
\section{Simulation Setup}
\label{app:simulation}
\subsection{polymer molecule}
\label{app:polymer}
The polymer molecule is represented as a bead–spring chain consisting of four sub-units, each containing four atoms. The total potential energy is expressed as
\begin{equation}
V_{\text{mol}}(Q) = \sum_{i \neq j} V_p(Q_{ij}) 
+ \sum_{i=1}^{N_b} V_b(l_i) 
+ \sum_{i=1}^{N_a} V_a(\theta_i) 
+ \sum_{i=1}^{N_d} V_d(\phi_i),
\label{eq:Vtotal}
\end{equation}
where \(V_p, V_b, V_a,\) and \(V_d\) denote the pairwise, bond, angle, and dihedral interactions, respectively. The detailed functional forms and parameters are given below.

The pairwise interaction \(V_p\) is modeled by the truncated Lennard–Jones potential:
\begin{equation}
V_p(Q) = 
\begin{cases}
4\varepsilon \left[ \left(\dfrac{\sigma}{Q}\right)^{12} - \left(\dfrac{\sigma}{Q}\right)^{6} \right] 
- 4\varepsilon \left[ \left(\dfrac{\sigma}{Q_c}\right)^{12} - \left(\dfrac{\sigma}{Q_c}\right)^{6} \right], & Q < Q_c, \\
0, & Q \geq Q_c ,
\end{cases}
\end{equation}
with parameters \(\varepsilon = 0.005\), \(\sigma = 1.8\), and cutoff distance \(Q_c = 10.0\).

Bonded interactions are modeled by the finite extensible nonlinear elastic (FENE) potential:
\begin{equation}
V_b(l) = -\frac{k_s l_0^2}{2} 
\log \left( 1 - \frac{l^2}{l_0^2} \right),
\end{equation}
where three different bond types are defined. Within each sub-unit, the atoms 1-2, 3-4 are connected by type 1. The atoms 2-3 are connected by type-2 bond. Finally, the sub-unit groups are connected by type-3.
The parameter values are summarized in Table~\ref{tab:bonds}.
\begin{table}[h]
\centering
\caption{Parameters of the FENE bond interactions.}
\label{tab:bonds}
\begin{tabular}{c c c}
\hline
Type & $k_s$ & $l_0$ \\
\hline
1 & 0.40 & 1.8 \\
2 & 0.64 & 1.6 \\
3 & 0.32 & 1.8 \\
\hline
\end{tabular}
\end{table}

Angular interactions are described by a harmonic potential:
\begin{equation}
V_a(\theta) = \frac{k_a}{2} (\theta - \theta_0)^2 ,
\end{equation}
where two different types are used. Within each sub-unit group, the bond angles formed by 1-2-3 and 2-3-4 are imposed by type-1 potential. The bond angles formed by atoms of different sub-unit groups (e.g. 3-4-5,4-5-6) are imposed by type-2 potential. Parameters are listed in Table~\ref{tab:angles}.
\begin{table}[h]
\centering
\caption{Parameters of the harmonic angle interactions.}
\label{tab:angles}
\begin{tabular}{c c c}
\hline
Type & $k_a$ & $\theta_0$ (deg) \\
\hline
1 & 1.2 & 114.0 \\
2 & 1.5 & 119.7 \\
\hline
\end{tabular}
\end{table}

The dihedral potential is modeled by a multiharmonic expansion:
\begin{equation}
V_d(\phi) = \sum_{n=1}^{6} A_n \cos^{n-1}(\phi),
\end{equation}
where two different types are used. Type 1 dihedral potential is imposed to dihedral angles formed by 2–3–4–5, 4–5–6–7, $\cdots$, and Type 2 dihedral angle are imposed to dihedral angles formed by 3–4–5–6, 7–8–9–10, $\cdots$.  Parameters are given in Table~\ref{tab:dihedrals}.

\begin{table}[h]
\centering
\caption{Parameters of the multiharmonic dihedral interactions.}
\label{tab:dihedrals}
\begin{tabular}{c c c c c c c}
\hline
Type & $A_1$ & $A_2$ & $A_3$ & $A_4$ & $A_5$ & $A_6$ \\
\hline
1 & 0.0673 & 1.8479 & 0.0079 & -2.2410 & -0.0058 & 0.0051 \\
2 & 0.1602 & -3.9993 & 0.2483 & 6.2837 & 0.0165 & -0.0146 \\
\hline
\end{tabular}
\end{table}

\subsection{alanine dipeptide}
\label{app:ala_mol}
The MD simulations were performed with GROMACS 2019.2\cite{lindahl_2019_2636382}  in combination with the open-source PLUMED library~\cite{tribello2014plumed}. The system was modeled using the Amber99SB force field~\cite{hornak2006comparison}, with the alanine dipeptide solvated in an aqueous environment containing $383$ TIP3P water molecules. Periodic boundary conditions were applied in all three spatial directions. A cutoff of 0.9 nm was used for van der Waals interactions. Long-range electrostatics were treated with the smooth particle-mesh Ewald (PME) method, using a real-space cutoff of $0.9$ nm and a reciprocal-space grid spacing of $0.12$ nm. The equations of motion were integrated with the leap-frog scheme and a time step of $2$ fs. The system temperature was maintained at $300$ K using a velocity-rescale thermostat~\cite{bussi2007canonical} with a relaxation time of $0.2$ ps. Pressure was controlled at 1 bar with the Parrinello–Rahman barostat~\cite{parrinello1981polymorphic} employing a relaxation time of 1.5 ps and an isothermal compressibility of $4.5\times 10^{-5}$ bar $^{-1}$. Bond lengths involving hydrogen atoms were constrained with the LINCS algorithm~\cite{hess1997lincs}, while the H–O bond lengths and H–O–H angles in water molecules were constrained with the SETTLE algorithm~\cite{miyamoto1992settle}. All simulations were carried out on an Intel(R) Xeon(R) Platinum 8260 CPU.

\section{Training Details}
\label{app:training}
The NNs are trained by Adam optimizer \cite{Kingma_Ba_Adam_2015} for $100000$ steps with a learning rate that starts from $1\times10^{-3}$ and decays to $1\times10^{-5}$ in polynomial form. For each training step, $100$ sampling points are randomly selected from the data set. All the training processes are conducted using an Nvidia GPU V100 with 32GB memory.




\section*{Acknowledgments}
The work is supported in part by the National Science Foundation under Grant DMS-2110981, the
Strategic Partnership Grant at Michigan State University and the ACCESS program through allocation MTH210005. The authors also acknowledge the support from the Institute for Cyber-Enabled Research at Michigan State University.


\end{document}